\documentclass[a4paper]{article}

\usepackage{amsmath,amssymb,amsthm,graphicx,fixmath,tikz,latexsym,color,xspace}
\usepackage[margin=1in]{geometry}

\usepackage[ruled,noend,linesnumbered]{algorithm2e}
\usepackage{xspace}
\usepackage{wrapfig}
\usepackage{enumerate}
\usepackage{url}

\definecolor{Apricot}{rgb}{0.98, 0.81, 0.69}
\definecolor{Orange}{rgb}{1.0, 0.5, 0.0}

\newtheorem{theorem}{Theorem}[section]
\newtheorem{lemma}[theorem]{Lemma}

\newtheorem{observation}[theorem]{Observation}

\newcommand*\samethanks[1][\value{footnote}]{\footnotemark[#1]}

\bibliographystyle{plain}

\begin{document}

\title{Hanabi is NP-complete, Even for Cheaters who Look at Their Cards\thanks{Y.~D.~was supported by the IMPACT Tough Robotics Challenge Project of Japan Science and Technology Agency.} $^{,}$\thanks{M.~K.~was supported in part by the ELC project (MEXT KAKENHI No.~12H00855 and 15H02665).} $^{,}$\thanks{V.~M~was supported by the ERC Starting Grant PARAMTIGHT (No.~280152).}}

\author{Jean-Francois Baffier\thanks{National Institute of Informatics (NII), Tokyo, Japan. \texttt{ \{jf\_baffier,chiumk,andre,marcel\}@nii.ac.jp } } $^{,}$\thanks{JST, ERATO, Kawarabayashi Large Graph Project.} 
\hspace{.3in} Man-Kwun Chiu\samethanks[4] $^{,}$\samethanks[5]
\hspace{.3in} Yago Diez\samethanks[3] $^{,}$\thanks{Tohoku University, Sendai, Japan. \texttt{ \{yago,mati\}@dais.is.tohoku.ac.jp }} 
\hspace{.3in} Matias Korman\samethanks[1] $^{,}$\samethanks[6] \\
Valia Mitsou\samethanks[2] $^{,}$\thanks{SZTAKI, Hungarian Academy of Sciences.  \texttt{ vmitsou@sztaki.hu}} 
\hspace{.3in} Andr\'e van Renssen\samethanks[4] $^{,}$\samethanks[5]
\hspace{.3in} Marcel Roeloffzen\samethanks[4] $^{,}$\samethanks[5]
\hspace{.3in} Yushi Uno\thanks{Graduate School of Science, Osaka Prefecture University. \texttt{ uno@mi.s.osakafu-u.ac.jp  }}
}
\date{}

\maketitle

\begin{abstract}
In this paper we study a cooperative card game called Hanabi from the viewpoint of algorithmic combinatorial game theory. In Hanabi, each card has one among $c$ colors and a number between $1$ and $n$. The aim is to make, for each color, a pile of cards of that color with all increasing numbers from $1$ to $n$.
At each time during the game, each player holds $h$ cards in hand. Cards are drawn sequentially from a deck and the players should decide whether to play, discard or store them for future use. One of the features of the game is that the players can see their partners' cards but not their own and information must be shared through hints.

We introduce a single-player, perfect-information model and show that the game is intractable even for this simplified version where we forego both the hidden information and the multiplayer aspect of the game, even when the player can only hold two cards in her hand.
On the positive side, we show that the decision version of the problem---to decide whether or not numbers from $1$ through $n$ can be played for every color---can be solved in (almost) linear time for some restricted cases.
\end{abstract}

\smallskip
\noindent \textbf{Keywords.} Algorithmic combinatorial game theory; computational complexity; solitaire games; sorting.

\section{Introduction}

When studying mathematical puzzles or games,
mathematicians and computer scientists are usually interested in finding winning strategies,
often by trying to design computer programs to play as close to optimally as possible. 
However, their effort is encumbered by the combinatorial explosion of the available choices in subsequent rounds.
The field of computational complexity provides tools to help decide whether solving a given puzzle
or finding the winner of a given game can be done efficiently, or to give strong evidence that such tasks might be practically infeasible.
Many games and puzzles have been studied from this perspective. Extensive lists can be found in \cite{bob+erik09} as well as in \cite{npc}. Some concrete examples include classic games, like Hex \cite{hex}, Sudoku \cite{sudoku}, Tetris \cite{tetris}, Go \cite{go}, and Chess \cite{chess}, as well as more recent ones such as Pandemic \cite{pandemic} and Candy Crush \cite{candycrush1, candycrush2}. Because of its popularity, this practice has lead to the emergence of a new field called \emph{algorithmic combinatorial game theory} 
\cite{AlgGameTheory_GONC3,bob+erik09},
in order to distinguish it from other existing fields studying games from different perspectives,
such as `combinatorial game theory' which focuses on the mathematical properties of winning strategies
in combinatorial games \cite{winningways}, or algorithmic game theory which also studies the
algorithmic properties of optimal strategies but in an economic setting \cite{nisan2007algorithmic}.

In this paper we study the computational complexity of a cooperative card game called Hanabi. Designed by Antoine Bauza and published in 2010, the game has received several tabletop game awards (including the prestigious {\em Spiel des Jahres} in 2013~\cite{spiel}). In the game the players simulate a fireworks show\footnote{the word {\em hanabi} means fireworks in Japanese.}, playing cards of different colors in increasing order.
As done previously for other multiplayer card games~\cite{ddhuuu-uhesp-14, LampisM14}, we study a single-player version of Hanabi and show that, even in this simplified model, the game is computationally intractable in general, while it becomes easy under very tight constraints.

\subsection{Rules of the Game}
Hanabi is a multi-player, imperfect-information and cooperative game. 
This game is played with a deck of fifty cards. Each card has two attributes: a \emph{value} (a number from 1 to 5) and a \emph{color} among five possible colors (red, yellow, green, blue and white). Thus, there are 25 different value-color combinations in total, but almost all combinations appear more than once in the deck. 
Players must cooperate in order to create five independent piles of cards from 1 to 5 in increasing order in each of the five colors.

One distinctive feature of the game is that players cannot see their own cards while playing: each player holds her cards so that they can be seen by other players (but not herself). At any given time, a player can hold only a small number of cards in hand (4 or 5 depending on the number of players) drawn at random from the deck. During their turn, players can do one of the following actions: play a card from their hand and draw a new card, discard a card from their hand to draw a new one, or give a hint to another player on what type of cards this other player is holding in hand.  
See~\ref{app:rules} for the exact rules of the Hanabi game, or~\cite{AntoineBauza10,BGG} for more information on the game.

\subsection{Related Work}
Several card games have already been studied from the computational complexity viewpoint \cite{ddhuuu-uhesp-14, LampisM14, durak, trick, magic}. One common element of virtually all such games is that the total description complexity can essentially be bounded by the number of cards (a constant), thus algorithmic questions can, technically speaking, be answered in constant time by an exhaustive search approach. Having said that, in order to study the algorithmic properties of a card game, we need to define an unbounded version of that game, where the complexity is expressed as a function of the number of cards in the deck. 

Another feature that is often present in this type of card-games is some form of randomness. Most commonly, the deck is shuffled so that the exact order in which cards arrive is unknown. This makes the game more fun, but it becomes hard to analyze from a theoretical point of view. Thus, in many cases we simplify our model and assume a {\em perfect information} setting in which everything is known. For example, for a deck of cards, even though every ordering of cards in the deck is a possible input, we assume the ordering is known to the players.
This simplification is quite common when studying card games and also meaningful when one is proving hardness results: even with perfect information, most games turn out to be difficult.

To give some concrete examples, the card game UNO was shown to be NP-hard even for a single player \cite{ddhuuu-uhesp-14}. Even more surprisingly, the popular trading card game {\em Magic: The Gathering} is Turing complete~\cite{magic}. That is, it can simulate a Turing machine (and in particular, it can simulate any other tabletop or card game). All of the above reductions assume perfect information.

There is little previous research studying the algorithmic aspects of Hanabi. Most of the existing research~\cite{cddkrt-mph-15,o-shehoacgii-15} proposes different strategies so that players can share information and collectively play as many cards as possible. Several heuristics are introduced, and compared to either experienced human players or to an optimal play sequences (assuming all information is known). 

Our approach diverges from the aforementioned studies in that it does not focus on information exchange through hints. We show that, even if we forego its trademark feature, the hidden information, the game is still intractable, which means that there is an intrinsic difficulty in Hanabi beyond information exchange. In fact we show hardness for a simplified solitaire version of the game where the single player has complete information about which cards are being held in her hand as well as the exact order in which cards will be drawn from the deck. 

\subsection{Model and Definitions}
In our unbounded model, we represent a card of Hanabi as an ordered pair $(a_i, k_i)$, where $a_i \in \{1,\dots , n\}$ is its \emph{value} and $k_i \in \{1,\dots , c\}$ its \emph{color} (in the original game $n = c = 5$).
The {\em multiplicity} $r$ of cards is the maximum number of times that any card appears in the deck (in the game $r=3$, though some card occur fewer than $3$ times). The whole deck of cards is then represented by a sequence $\sigma$ of $N \le n\cdot c \cdot r$ cards. That is, $\sigma=((a_1,k_1),\dots ,(a_N, k_N))$.  The {\em hand size} $h$ is the maximum number of cards that the player can hold in hand at any point during the game. 

In a game, cards are drawn in the order fixed by $\sigma$. 
During each turn a player normally has three options: play a card from her hand, spend a hint token to give a hint to a fellow player, or discard a card to regain a hint token. After her turn, she draws a new card if she needs to replace a played or discarded one. As our model drops completely the information sharing feature of the game, we can ignore moves that gain or spend hint tokens. Furthermore, since our variation has a unique player who knows in advance the order in which cards appear in the deck, there is no need for the player's hand to be full throughout the game: the player can start with an empty hand and go through the whole deck while storing or playing cards on demand. The three available options when drawing a new card are thus: 
{\em discard} it; {\em store} it for future use; or {\em play} it straightaway. 
If a card is discarded, it is gone and can never be used afterwards. 
If instead we store the card, it is saved and can be played later taking into account the hand limit of~$h$. (Note that since we allow playing the top card of the deck, a hand size of $h$ in our version would be comparable to a hand size of $h+1$ in the original game rules, where each card must be taken in hand before playing.)
Cards can be played only in increasing order for each color independently. 
That is, we can play card $(a_i, k_i)$ if and only if the last card of color $k_i$ 
that was played was $(a_{i-1}, k_i)$, or if $a_i=1$ and no card of color $k_i$ has been played. 
After a card has been played we may also play any cards that we stored in hand 
in the same manner.  
The objective of the game is to play (a single copy of) all cards from $1$ to $n$ in all $c$ colors. 
Whenever this happens we say that 
there is a {\em winning play sequence} for the card sequence $\sigma$. 

Thus, a problem instance of the {\sc Solitaire Hanabi} (simply referred to as {\sc Hanabi} in the rest of this paper) consists of a hand size $h\in \mathbb{N}$ and a card sequence $\sigma$ of $N$ cards (where each card is an ordered pair of a value and a color out of $n$ numbers and $c$ colors, and no card appears more than $r$ times). The aim is to determine whether or not there is a winning play sequence for $\sigma$ that never stores more than $h$ cards in hand. 

\subsection{Results and Organization}
In this paper, we study algorithmic and computational complexity aspects of {\sc Hanabi} 
with respect to parameters  $N$, $n$, $c$, $r$ and $h$. We show that the problem is NP-complete, even if we fix some parameters to be small constants. Specifically, in Section~\ref{sec_hard} we prove that the problem is NP-complete for any fixed values of $h$ and $r$ (as long as $h=1$ and $r\geq 3$ or $h \geq 2$ and $r \geq 2$).

Given the negative results, we focus on the design of algorithms for particular cases. For those cases, our aim is to design algorithms whose running time is linear in $N$ (the total number of cards in the sequence), but we allow worse dependencies on $n$, $c$, and $r$ (the total number of values, colors and multiplicity, respectively). 

In Section~\ref{sec:uni} we give a straightforward $O(N)$ algorithm for the case in which $r=1$ (that is, no card is repeated in $\sigma$). A similar result is later also presented for $c=1$ (and unbounded $r$) in Section~\ref{sec:lazy}. In Section~\ref{sec:generalDP} we give an algorithm for the general problem. Note that this algorithm runs in exponential time (expected for an NP-complete problem). 
The exact running times of all algorithms introduced in this paper  are summarized in Table~\ref{tab_results}.

\begin{table}[htb]
\centering
\begin{tabular}{|c||c|c|c|} \hline
Case Studied & Approach Used & Running Time & Observations \\ \hline 
$r=1$ &Greedy & $O(N)=O(cn)$ &Lemma~\ref{lem:solve_uni} in Sec.~\protect{\ref{sec:uni}} \\
$c=1$ &Lazy & $O(N+n\log h)$ & Theorem~\ref{th_lazy} in Sec.~\ref{sec:lazy} \\ 
General Case &Dynamic Programming & $O(N(h^2+hc)c^hn^{h+c-1})$ & Theorem~\ref{theo_dp} in Sec.~\ref{sec:generalDP} \\ \hline 
$h\ge 2$ and $r \geq 2$ or  &NP-complete &  & Theorem~\ref{theo_nphard} in Sec.~\ref{sec_hard} \\ 
$h = 1$ and $r \geq 3$ & & & \\ \hline
\end{tabular}
\caption{Summary of the different results presented in this paper, 
where $N$, $n$, $c$, $r$ and $h$ are the number of cards, the number of values, 
the number of colors, multiplicity, and the hand size, respectively.}
\label{tab_results}
\end{table}

\section{Unique Appearance}\label{sec:uni}
As a warm-up, we consider the case in which each card appears only once, that is, $r=1$. In this case we have exactly one card for each value and color combination. Thus, $N=cn$ and the input sequence $\sigma$ is a permutation of the values from $1$ to $n$ in the $c$ colors. 
Since each card appears only once, we cannot discard any card in the winning play sequence. In the following, we show that the natural greedy strategy is the best we can do: play a card as soon as it is found (if possible). If not, store it in hand until it can be played. 

From the game rules it follows that we cannot play a card $(a_i, k_i)$ until all the cards from value 1 to $a_i - 1$ of color $k_i$ have been played. Thus, we associate an interval to each card that indicates for how long that card must be held in hand. For any card $(a_i,k_i)$, let  $f_i$ be the largest index of the cards of color~$k_i$ whose value is at most $a_i$ (i.e., $f_i= \max_{j\leq N}\{j \colon k_j=k_i, a_j \leq a_i\}$). Note that we can have that $i = f_i$, in which case all cards of the same color and lower value than the $i$-th card have already been drawn from the deck and in case of a greedy algorithm, already played if the sequence if playable. Hence, the card can be played right from the top of the deck. Otherwise, we have $f_i> i$, and card $(a_i,k_i)$ cannot be played until we have reached card $(a_{f_i},k_{f_i})$.
We associate each index $i$ to the interval $(i,f_i]$ which represents the time interval during which the card must be kept in hand. Let $\mathcal{I}$ be the collection of all nonempty such intervals. Let $w$ be the maximum number of intervals that overlap, that is,  $w=\max_{j\leq N} | \{ (i, f_i] \in \mathcal{I} \colon j\in (i, f_i]\} |$.

\begin{lemma}
\label{lem:solve_uni}
There is a solution to any {\sc Hanabi} problem instance with $r=1$ and hand size $h$ if and only if $w \leq h$. Moreover, a play sequence can be found in $O(N)$ time.
\end{lemma}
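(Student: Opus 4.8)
The plan is to prove the two directions of the equivalence separately and to read off the $O(N)$ algorithm from the sufficiency argument, since the natural greedy strategy will turn out to be optimal. For the necessity direction (a solution exists only if $w\le h$), the starting observation is that because $r=1$ there is exactly one copy of each of the $cn$ cards, so a winning play sequence must eventually play every card and hence may never discard; consequently, once a card is drawn it stays in hand until it is played. I would then argue that the interval $(i,f_i]$ is a genuine lower bound on the holding time of card $i$: when $f_i>i$ the card sitting at position $f_i$ has color $k_i$ and value strictly smaller than $a_i$ (strictness uses $r=1$), so it must be played before $(a_i,k_i)$, whence card $i$ cannot be played before time $f_i$. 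Fixing a position $j^\ast$ realizing the maximum overlap $w$, every card $i$ with $j^\ast\in(i,f_i]$ has already been drawn (as $i<j^\ast$), cannot yet have been played (as $j^\ast\le f_i$), and cannot have been discarded, so all $w$ of these cards occupy the hand simultaneously at time $j^\ast$. This forces $h\ge w$.

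For sufficiency, I would analyze the greedy strategy (play a card the moment it becomes playable, otherwise store it) and establish the key structural claim: under greedy the card $(a_i,k_i)$ is played exactly at time $f_i$. This follows by induction on the value within each color. Writing $p_v$ for the position of the value-$v$ card of color $k$ and $t(v)$ for the time it is played, the value-$1$ card is playable on arrival, so $t(1)=p_1=f$; and a value-$v$ card becomes playable exactly when it has been drawn and the value-$(v-1)$ card has been played, giving $t(v)=\max(p_v,\,t(v-1))=\max\{p_1,\dots,p_v\}=f$ by the inductive hypothesis. Since each card is drawn at time $i$ and played at time $f_i$, it occupies a hand slot precisely over $(i,f_i]$, so the number of cards held at any moment equals the number of intervals of $\mathcal{I}$ covering that moment, which is at most $w\le h$. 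Thus greedy never violates the hand limit and plays all $cn$ cards, producing a winning play sequence.

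For the running time, I would compute all $f_i$ in one pass: for each color record the position of each of its values and take prefix maxima over the values. The value $w$ is then obtained by a standard sweep over the interval endpoints with a difference array, and greedy is simulated while maintaining, per color, the next value to be played; since each card is drawn once and played once, the cascade of plays triggered after each draw costs $O(N)$ in total. Every step is linear, giving the claimed $O(N)$ bound, and a direct comparison of the computed $w$ against $h$ settles the decision.

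I expect the main obstacle to be the structural claim that greedy plays card $i$ at exactly $f_i$, together with the careful matching of the half-open interval $(i,f_i]$ to the instants at which a card genuinely occupies a hand slot. Once this identity between hand occupancy and interval overlap is in place, the optimality of $w$ as a lower bound and the correctness of greedy as a matching upper bound both fall out immediately.
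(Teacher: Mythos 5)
Your proof is correct and takes essentially the same route as the paper's: necessity because with $r=1$ no card may ever be discarded, so the $w$ overlapping intervals $(i,f_i]$ force $w$ cards to sit in hand simultaneously, and sufficiency because greedy plays each card exactly at time $f_i$, so hand occupancy coincides with interval overlap and never exceeds $w\leq h$. Your version is in places more detailed than the paper's (the explicit induction that greedy plays card $i$ at time $f_i$, and deterministic position arrays plus a difference-array sweep in place of the paper's hash-table/bit-matrix bookkeeping), but the decomposition and key ideas are identical.
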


\begin{proof}
Intuitively, any interval $(i,j]\in \mathcal{I}$ represents the need of storing card $(a_i,k_i)$ until we have reached card $(a_j,k_j)$. Thus, if two (or more) intervals overlap, then the corresponding cards must be stored simultaneously. By definition of $w$, when processing the input sequence at some point in time we must store more than $h$ cards, which implies that no winning play sequence exists. 

In order to complete the proof we show that the greedy play strategy works whenever $w \leq h$. The key observation is that, for any index $i$ we can play card $(a_i,k_i)$ as soon as we have reached the $f_i$-th card. Indeed, by definition of $f_i$ all cards of the same color whose value is $a_i$ or less have already appeared (and have been either stored or played). Thus, we can simply play the remaining cards (including $(a_i,k_i)$) in increasing order. 

Overall, each card is stored only within its interval. By hypothesis, we have $w\leq h$, thus we never have to store more than our allowed hand size. Furthermore, no card is discarded in the play sequence, which in particular implies that the greedy approach will give a winning play sequence with hand size $h$.

Regarding running time, it suffices to show that each element of $\sigma$ can be treated in constant time. To do this, we need a data structure that allows insertions into $\mathcal{H}$ and membership queries in constant time. The simplest data structure that allows this is a hash table. Since we have at most $h$ elements (out of a universe of size $cn$) it is easy to have buckets whose expected size is constant.

The only drawback of hash tables is that the algorithm is randomized (and the bounds on the running time are expected). If we want a deterministic worst case algorithm, we can instead represent $\mathcal{H}$ with a $c\times n$ bit matrix and an integer denoting the number of elements currently stored. In this data structure, each bit in the matrix is flipped at most twice and inspected at most once. Each bit represents a single card of which there is only one occurrence and is flipped only when that card is added or removed from the hand. The bit is inspected only when the card of the same color and one lower value is played, so we can charge the inspection time to the card that is being played. As a result all the operations associated to the bit matrix will require at most $O(cn) = O(N)$ time in total. 
\end{proof}

\section{Lazy Strategy for One Color}~\label{sec:lazy}
We now study the case in which all cards have the same color (i.e., $c=1$). Note that we make no assumptions on the multiplicity or any other parameters. 
Unlike the last section in which we considered a greedy approach, here we describe a lazy approach that plays cards at the last possible moment. 

We start with an observation that allows us to detect how important a card is. For this purpose we define the concept of a \emph{useless} card. A card $i$ with value $a_i$ is considered useless if there are at least $h+1$ values higher than $a_i$ that do not occur on any cards that appear after $i$ in the deck. Intuitively, if we would play card $i$, then to finish playing all values we must store one card for each of these $h+1$ values in hand as they will not occur after playing (and drawing) $i$. More formally we define a useless card as follows.

\begin{quotation}
\noindent \textbf{Useless card:} For any $i\leq N$, we say that the $i$-th card (whose value is $a_i$) is {\em useless} if there exist $w_1, \ldots, w_{h+1}\in \mathbb{N}$ such that: 
\begin{enumerate}[(i)]
\setlength\itemsep{0em}
\item $a_i < w_1 < \dots <w_{h+1} \leq n$ 
\item $\forall j\in \{i+1, \ldots, N\}$ it holds that $a_j \not\in \{w_1, \ldots, w_{h+1}\}$
\end{enumerate}

\end{quotation}

\noindent
We say $w_1, \ldots, w_{h+1}$ to be the witnesses of the useless card.
Observe that no card of value $n-h$ or higher can be useless (since the $w_i$ values cannot exist) and that the last card is useless if and only if $a_N<n-h$. 

\begin{observation}\label{obs_filterbase}
Useless cards cannot be played in a winning play sequence.
\end{observation}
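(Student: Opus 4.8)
The plan is to argue by contradiction using a capacity (pigeonhole) argument pinned to the exact moment the useless card is played. Suppose, for contradiction, that some useless card $i$, with witnesses $w_1 < \dots < w_{h+1}$, is played as part of a winning play sequence. I would focus entirely on the game state immediately after the action that plays card $i$.

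First I would record two facts about this state. Since cards are drawn in deck order and card $i$ sits at position $i$, by the time card $i$ is played every card at a position $\le i$ has already been drawn; in particular, condition (ii) forces every card of value $w_1, \dots, w_{h+1}$ to lie at a position $\le i$, so all copies of these witness values have been drawn by this moment. Second, because $c=1$ there is a single pile built in strictly increasing value order, so right after card $i$ is played the top of the pile is exactly $a_i$ and no value exceeding $a_i$ has yet been played. As each witness satisfies $w_t > a_i$, none of the witness values is on the pile at this instant.

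Next I would locate, for each witness value $w_t$, the card that the winning sequence eventually uses to play $w_t$; such a card exists for every $t$ since the sequence is winning and $c=1$ forces all of $1, \dots, n$ onto the pile. This card cannot come from a deck position $> i$ (by condition (ii) no such card carries a witness value), so it is already drawn; it has not been played (no value above $a_i$ is on the pile yet) and it cannot have been discarded (a discarded card is lost forever, yet this one is played later). Hence, immediately after card $i$ is played, a representative of each witness value must be held in the hand. These representatives have pairwise distinct values $w_1, \dots, w_{h+1}$, and none of them is card $i$ itself since $a_i \notin \{w_1, \dots, w_{h+1}\}$, so they constitute $h+1$ distinct cards stored simultaneously.

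This contradicts the hand-size bound, since the hand never stores more than $h$ cards; therefore no useless card can appear in a winning play sequence. I do not anticipate a serious obstacle, as this is essentially a pigeonhole count; the only point demanding care is the bookkeeping of the hand contents at the precise instant card $i$ is played, namely verifying that each witness card is genuinely in the hand (drawn, not yet played, not discarded) rather than still sitting in the deck. That verification is exactly where condition (ii) does its work.
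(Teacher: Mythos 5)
Your proof is correct and follows essentially the same argument as the paper's: assume a useless card is played, note that no witness value can have been played yet and (by condition (ii)) no witness-valued card remains in the deck, so one card of each of the $h+1$ witness values must be simultaneously stored in hand, contradicting the hand-size bound $h$. The only difference is cosmetic --- you pin the state at the moment card $i$ is played rather than when it is drawn, and you spell out the drawn/not-played/not-discarded bookkeeping that the paper leaves implicit.
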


\begin{proof}
Assume, for the sake of contradiction, that there exists a winning play sequence that plays some useless card whose index is $i$. Since we play cards in increasing order, no card of value equal to or bigger than $a_i$ has been played when the $i$-th card is drawn. By definition of useless cards, the remaining sequence does not have more cards of values $w_1, \ldots, w_{h+1}$. Thus, in order to complete the game to a winning sequence, these $h+1$ cards must all have been stored, but this is not possible with a hand size of $h$. 
\end{proof}

Our algorithm starts with a filtering phase that removes all useless cards from $\sigma$. The main difficulty of this phase is that the removal of useless cards from $\sigma$ may make other cards useless. In order to avoid scanning the input multiple times we use two vectors and a max-heap. The first vector $P$ is computed at the start and does not change throughout the algorithm. 
For each index $i\leq N$, we store the index of the last card before $i$ in $\sigma$ with the same value in a vector $P$ (or $-\infty$ if no such card exists). That is, $P[i]=-\infty$ if and only if $a_j \neq a_i$ for all $j<i$. Otherwise, we have $P[i]=i'$ (for some $i'<i$), $a_i=a_{i'}$, and $a_j \neq a_i$ for all $j\in \{i'+1, \ldots, i-1\}$. As our algorithm progresses it will mark cards as useless, so initially all cards are considered as non-useless. We use a second vector $L$ in which each index $i\leq n$ stores the last card of value $i$ that has not been found useless. This vector will be updated as cards are marked as useless and implicitly removed from the sequence. Since initially no card has been found useless, the value $L[i]$ is initialized to the index of the last card with value $i$ in $\sigma$. Finally, we use a max-heap $\mathcal{HP}$ of $h+1$ elements initialized with the indices stored in $L[n-h], \ldots, L[n]$. 
 
Now, starting with $i=n-(h+1)$ down to $1$ we look for all useless copies of value $i$. The invariant of the algorithm is that for any $j>i$, all useless cards of value $j$ have been removed from $\sigma$ and that vector $L[j]$ stores the index of the last non-useless card of value $j$. The heap $\mathcal{HP}$ contains the lowest $h+1$ indices among $L[j], \ldots, L[n]$ (and since it is a max-heap we can access its largest value in constant time). These values of the cards with indices of $\mathcal{HP}$ will be the smallest possible candidate values for the witnesses $w_1, \ldots, w_{h+1}$ (note that we can extract these values in constant time from $\sigma$). The invariants are satisfied for $i=n-(h+1)$ directly by the way $L$ and $\mathcal{HP}$ are initialized. 

Any card of value $i$ whose index is higher than the top of the heap is useless and can be removed from $\sigma$ (the indices in the heap $\mathcal{HP}$ act as witnesses). Starting from $L[i]$, we remove all useless cards of value $i$ from $\sigma$ until we find a card of value $i$ whose index is smaller than the top of the heap. If no card of value $i$ remains we stop the whole process and return that the problem instance has no solution. Otherwise, we have found the last non-useless card of value $i$. We update the value of $L[i]$ since we have just found the last non-useless card of that value. Finally, we must update the heap $\mathcal{HP}$. As observed above, the value of $L[i]$ must be smaller than the largest value of $\mathcal{HP}$ (otherwise it would be a useless card). Thus, we remove the highest element of the heap, and insert $L[i]$ instead. Once this process is done, we proceed to the next value of $i$. Let $\sigma'$ be the result of filtering $\sigma$ with the above algorithm. 

\begin{lemma}\label{lem_depur}
The filtering phase removes only useless cards and $\sigma'$ contains no useless cards. 
Moreover, this process runs in $O(N+n\log h)$ time. 
\end{lemma}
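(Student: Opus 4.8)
The plan is to prove the two correctness claims and the time bound in turn, concentrating the work on a single loop invariant that couples the vector $L$ and the heap $\mathcal{HP}$ to the notion of uselessness. The first step I would take is to restate the useless condition purely in terms of last occurrences: a card of value $i$ sitting at position $p$ is useless with respect to the cards still present exactly when at least $h+1$ distinct values strictly above $i$ have their last surviving occurrence at a position smaller than $p$. With this reformulation the role of the heap becomes transparent. Since $\mathcal{HP}$ is meant to hold the $h+1$ smallest last-occurrence indices among the values already processed (those greater than $i$), its maximum, the top, is precisely the threshold: a value-$i$ card is useless if and only if its position exceeds the top of $\mathcal{HP}$.

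The central object to establish is the loop invariant already asserted in the description: just before value $i$ is processed, (a) for every $j>i$ all useless copies of value $j$ have been removed and $L[j]$ is the index of the last surviving card of value $j$, and (b) $\mathcal{HP}$ contains the $h+1$ smallest indices among $L[i+1],\dots,L[n]$. I would check the base case $i=n-(h+1)$ directly from the initialization, using the observation that no card of value $n-h$ or larger can be useless, so the top $h+1$ values need no cleaning and $\mathcal{HP}$ starts as the full set $L[n-h],\dots,L[n]$. For the inductive step I would first invoke (b) and the reformulation to argue that the backward scan from $L[i]$ deletes \emph{exactly} the useless copies of value $i$: every copy above the top is useless and every copy below it is not, so the useless copies form precisely the high-position suffix of the occurrences of value $i$, which is what the scan deletes before halting at the first surviving copy; this simultaneously discharges the ``removes only useless cards'' half of the lemma. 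Setting $L[i]$ to this surviving index and then replacing the top of $\mathcal{HP}$ by $L[i]$ restores (b): the surviving copy is non-useless, so its index lies below the old top, and swapping out the old top for $L[i]$ yields the $h+1$ smallest indices among $L[i],\dots,L[n]$.

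The point I expect to be the main obstacle is the cascading difficulty flagged in the text, namely that deleting useless cards can create new useless cards, together with the claim that one decreasing pass nonetheless suffices. The key observation that resolves it is that the useless condition for a card of value $i$ refers only to values strictly greater than $i$; hence removing copies of value $i$, or of any value at most $i$, can never make a card of value greater than $i$ useless, so once a value is processed its surviving cards remain non-useless forever. This is exactly why processing values from high to low, feeding each finalized $L[\cdot]$ into the heap, absorbs the entire cascade in a single sweep, and it lets me conclude that $\sigma'$ contains no useless card: when value $i$ is finalized, the last-occurrence data for all higher values is already in its final form, so a copy that survives the scan fails the criterion against data that will never change again.

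Finally, for the running time I would split the cost into three phases. Computing $P$ and initializing $L$ each take one $O(N)$ scan, and building the initial heap on $h+1$ elements is $O(h)$. In the main loop, the backward scans inspect each occurrence of each value at most once, since a deleted card is never revisited (its same-value predecessor is reached through $P$ in $O(1)$ time) and the single surviving card per value is charged to that value; this bounds all scanning by $O(N+n)$. The only superlinear contribution is the heap maintenance: one delete-max and one insertion per value on a heap of size $h+1$, each costing $O(\log h)$, for a total of $O(n\log h)$. Summing the phases gives the claimed $O(N+n\log h)$ bound.
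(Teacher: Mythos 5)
Your proof is correct and takes essentially the same route as the paper: the loop invariant you prove by induction (with $\mathcal{HP}$ holding the $h+1$ smallest last-occurrence indices of the already-processed higher values, so that the heap top is exactly the uselessness threshold) is the very invariant the paper asserts in its algorithm description, and your phase-by-phase time accounting matches the paper's. Your version is in fact more rigorous than the paper's terse argument, since you make explicit the ``position exceeds heap top iff useless'' characterization and the monotonicity observation (uselessness of a value-$i$ card depends only on values strictly above $i$) that justifies why a single high-to-low pass absorbs the cascade.
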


\begin{proof}
Each time we remove a card from the card sequence, the associated $h+1$ witnesses $w_1, \ldots w_{h+1}$ are present in $\mathcal{HP}$. The fact that no more useless cards remain follows from the fact that we always store the smallest possible witness values.

Now we bound the running time. The heap is initialized with $h+1$ elements, and during the whole filtering phase $O(n)$ elements are pushed. Hence, the heap operations take $O(n\log h)$ time. The vector $P$ does not change during the algorithm and can be computed in $O(N)$ time using a scan from $1$ to $N$ with and $n$-length auxiliary array to store the last index where each value occurs in the sequence. Vector $L$ can be initialized by scanning $\sigma$ once. During the iterative phase we can access the last occurrence of any value by using vector $L$. Once a card is removed, we can update the last occurrence stored in $L$ by simple look-up in $P$. Thus, we spend constant time per card that is removed from $\sigma$ (hence, overall $O(N)$ time). 
\end{proof}

Now we describe the algorithm for our lazy strategy. The play sequence is very simple: we ignore all cards except when a card is the last one of that value present in $\sigma'$. For those cards, we play them if possible or store them otherwise. Whenever we play a card, we play as many cards as possible (out of the ones we had stored). 

Essentially, there are two possible outcomes after the filtering phase. It may happen that all cards of some value were detected as useless. In this case, none of those cards may be played and thus the {\sc Hanabi} problem instance has no solution. Otherwise, we claim that our lazy strategy will yield a winning play sequence.

\begin{theorem}\label{th_lazy}
We can solve a {\sc Hanabi} problem instance for the case in which all cards have the same color (i.e., $c=1$) in $O(N+n\log h)$ time.
\end{theorem}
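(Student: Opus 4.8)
The plan is to combine the filtering phase of Lemma~\ref{lem_depur} with the lazy play strategy described above, treating correctness and running time separately. The filtering phase produces $\sigma'$ in $O(N+n\log h)$ time and, by Lemma~\ref{lem_depur}, removes only useless cards while leaving $\sigma'$ free of useless cards. If some value in $\{1,\dots,n\}$ has no card left in $\sigma'$ (either it was absent from $\sigma$ or all of its copies were filtered out), then by Observation~\ref{obs_filterbase} no copy of that value can ever be played, so no winning sequence exists and we report failure; this check costs $O(N)$. It therefore remains to prove that, whenever every value survives in $\sigma'$, the lazy strategy produces a winning play sequence, and that it runs within the claimed bound. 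Together with Observation~\ref{obs_filterbase} this also yields the converse, so that a winning sequence exists if and only if every value survives in $\sigma'$.

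The crux, and the step I expect to be the main obstacle, is to show that the lazy strategy never needs to store more than $h$ cards. I would argue by contradiction. Suppose that while scanning $\sigma'$ the strategy is about to hold $h+1$ cards, the last of them stored at index $i$, and let $p$ be the smallest value not yet played at that moment. Every stored card is the \emph{last} occurrence of its value in $\sigma'$ and is currently unplayable, so no stored value equals $p$ (otherwise the cascade would already have played it, since $1,\dots,p-1$ are played by minimality of $p$) and every stored value strictly exceeds $p$. Hence the $h+1$ stored values $v_1<\dots<v_{h+1}$ all satisfy $p<v_1$, and since they are last occurrences already passed, none of them appears in $\sigma'$ after index $i$. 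Now consider the last occurrence of value $p$ in $\sigma'$, at some index $i_p>i$ (it cannot lie at or before $i$, else $p$ would already be played or stored). The values $v_1,\dots,v_{h+1}$ are then exactly $h+1$ witnesses making this card of value $p$ useless in $\sigma'$, contradicting Lemma~\ref{lem_depur}. Thus the hand limit is never exceeded.

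It then remains to verify that every value is actually played, which I would show by induction on $v$: the cascade plays value $1$ at its last occurrence, and once values $1,\dots,v-1$ are played, value $v$ (present in $\sigma'$ by assumption) is played either immediately when its last occurrence is processed, if that occurs afterwards, or by the cascade triggered when value $v-1$ is played, if its last occurrence was already stored. Since the hand never overflows, no stored card is ever lost, so the induction goes through and all $n$ values are played, giving a winning sequence. For the running time, the lazy phase is a single scan of $\sigma'$: last occurrences can be recognized using the vector $L$ from the filtering phase in $O(1)$ per card, while a boolean array indexed by value together with a pointer to the smallest unplayed value makes each cascade step run in $O(1)$, for a total of $O(N+n)=O(N)$ (note $N\ge n$ once every value survives). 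Adding the $O(N+n\log h)$ cost of filtering from Lemma~\ref{lem_depur} yields the claimed $O(N+n\log h)$ bound.
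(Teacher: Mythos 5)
Your proposal is correct and follows essentially the same route as the paper's proof: filter with Lemma~\ref{lem_depur}, reject if some value disappears, and then rule out hand overflow in the lazy strategy by exhibiting the $h+1$ stored values as witnesses that the last remaining copy of the smallest unplayed value would be useless, contradicting that $\sigma'$ is free of useless cards. Your write-up is somewhat more detailed than the paper's (the explicit induction that every value gets played and the per-step running-time accounting are left implicit there), but the key argument is identical.
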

\begin{proof}
It suffices to show that our lazy strategy will always give a winning play sequence, assuming that the filtered sequence contains at least a card of each value. Our algorithm considers exactly one card of each value from $1$ to $n$. The card will be immediately played (if possible) or stored until we can play it afterwards. Thus, the only problem we might encounter would be the need to store more than $h$ cards at some instant of time. 

However, this cannot happen: assume, for the sake of contradiction, that at some instant of time we need to store a card (whose index is $j$) and we already have stored cards of values $a_{i_1}, \ldots, a_{i_h}$. By construction of the strategy, there cannot be more copies of cards with value $a_{i_1}, \ldots, a_{i_h}$ or $a_j$ in the remaining portion of $\sigma'$. Let $p$ be the number of cards that we have played at that instant of time. Remember that we never store a card that is playable, thus $p+1\not\in\{a_j,a_{i_1}, \ldots, a_{i_h}\}$. In particular, the last card of value $p+1$ must be present in the remaining portion of $\sigma'$. However, that card is useless (the values $\{a_j,a_{i_1}, \ldots, a_{i_h}\}$ act as witnesses), which gives a contradiction. 

Thus, we conclude that the lazy strategy will never need to store more than $h$ cards at any instant of time, and it will yield a winning play sequence. Finally, observe that the play sequence for $\sigma$ is easily found from the winning sequence of $\sigma'$, since vector $L$ stores the last non-useless occurrence of each value.
\end{proof}

\section{General Case Algorithm}
\label{sec:generalDP}

In this section we study the general problem setting, where we consider any number of colors ($c$), values ($n$), occurrences ($r$) and handsize ($h$). 
Recall that this problem is NP-complete, even if the hand size is small (see details in Section~\ref{sec_hard}), hence we cannot expect an algorithm that runs in polynomial time. In the following, we give an algorithm that runs in polynomial time provided that both~$h$ and~$c$ are fixed constants (or exponential otherwise). 
We solve the problem using a dynamic programming approach. To this end we construct a table $DP$ in which each entry represents the maximum number of cards of color $c$ (the last color) that we could have played under several constraints. We group these constraints into three groups as follows:

\begin{itemize}
\setlength{\itemsep}{0em}
\item integer $s \leq N$ represents the number of cards from the sequence $\sigma$ that we have drawn. That is we consider play sequences of the first $s$ elements of $\sigma$.
\item $\mathcal{H}$ is the set of cards that we have stored in hand after the $s$-th card $(a_s,k_s)$ has been processed. We might have no constraints on the cards we have in hand, in which case we simply set $\mathcal{H}=\emptyset$.
\item $p_1,\ldots, p_{c-1}$ with $p_i \leq n$ encode how many cards we have played in each of the first $c-1$ colors, respectively.
\end{itemize}

For the purpose of describing the algorithm we consider $DP$ as a table with parameters $s, \mathcal{H}$ and $p_1, \ldots, p_{c-1}$.
For example, when $c = 3$ and we have $s = 42$, $\mathcal{H} = \{(15,1), (10,2)\}$, $p_1 = 10$ and $p_2 = 4$, then we should interpret $DP[42,\{(15,1), (10,2)\},10,4]=6$ as follows: {\em There is a play sequence over the first $s=42$ elements of $\sigma$ so that we have played exactly $p_1 = 10$ cards of the first color, $p_2=4$ of the second, 6 of the third, and we still have cards $(15,1)$ and $(10,2)$ (those in $\mathcal{H}$) stored in hand. Moreover, there is no play sequence that, under the same constraints, plays $7$ cards in the third color}. 

When $s$ is a small number we can find the solution of an entry by brute force (try all possibilities of discarding, storing or playing the first $s$ cards). This takes constant time since the problem has constant description complexity. Similarly, we have $DP[s,\mathcal{H},p_1, \ldots, p_{c-1}]=-\infty$ whenever $|\mathcal{H}|>h$ (because we need to store more than $h$ cards in hand). In the following we show how to compute the table $DP$ for the remaining cases. For each table entry $DP[s-1,\mathcal{H},p_1, \ldots, p_{c-1}]$ (or $DP[\cdot]$ for short) we find the appropriate value by considering what action to take with the $s$-th card. Remember that we can choose to either \emph{discard}, \emph{play} or \emph{store} the card. For ease of description consider the tables $\mathcal{D}[\cdot], \mathcal{S}[\cdot]$ and $\mathcal{P}[\cdot]$, parameterized the same as $DP[\cdot]$. Then an entry in these tables denotes the maximum number of cards of color $c$ that can be played under the same constraints as in $DP[\cdot]$ and the additional constraints that the $s$-th card is discarded, stored or played respectively. Entries in these tables can be obtained from entries in $DP$ with a lower value for $s$ as follows. (Note that we won't explicitly need to construct these additional tables, as their entries are only needed when computing $DP[\cdot]$.)

$$\mathcal{D}[\cdot]=DP[s-1,\mathcal{H},p_1, \ldots, p_{c-1}] $$
$$ \mathcal{S}[\cdot]= \begin{cases} 
      -\infty& \text{if } (a_s,k_s) \not\in H \\
      DP[s-1,\mathcal{H}\setminus \{(a_s,k_s)\},p_1, \ldots, p_{c-1}]  & \text{otherwise}  
   \end{cases}
$$
$$ \mathcal{P}[\cdot]= \begin{cases} 
      -\infty& \text{if } k_s <c,\, a_s>p_{k_s} \\
      DP[s-1,\mathcal{H}\cup \{(a_s\!+\!1,k_s), \ldots, (p_{k_s},k_s)\}, \\
      \hspace{3cm} p_1, \ldots, p_{k_s-1}, a_s\!-\!1, p_{k_s+1},\ldots, p_{c-1}] & \text{if } k_s <c,\,  a_s \leq p_{k_s} \\
   \max_{t\in \{0, \ldots, h\}} \{a_s\!+\!t \colon DP[s-1,\mathcal{H}\cup \{(a_s\!+\!1,c),\ldots, (a_s\!+\!t,c)\}, \\  \hspace{3cm} p_1, \ldots, p_{c-1}] = a_s\!-\!1\}  & \text{if }k_s=c  
   \end{cases}
$$

Next we prove that the entries in these additional tables indeed help to compute the correct entry for $DP[\cdot]$.

\begin{lemma}
$DP[s,\mathcal{H},p_1, \ldots, p_{c-1}]  (=DP[\cdot]) =\max(\mathcal{P}[\cdot],\mathcal{S}[\cdot],\mathcal{D}[\cdot])$. 
\end{lemma}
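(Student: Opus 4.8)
The plan is to prove the stated equality by establishing the two inequalities $DP[\cdot]\ge\max(\mathcal{P}[\cdot],\mathcal{S}[\cdot],\mathcal{D}[\cdot])$ and $DP[\cdot]\le\max(\mathcal{P}[\cdot],\mathcal{S}[\cdot],\mathcal{D}[\cdot])$ separately, always reading an entry as the largest number of color-$c$ cards achievable by a play sequence on the first $s$ cards that ends with hand exactly $\mathcal{H}$ and with color counts $p_1,\dots,p_{c-1}$ in the first $c-1$ colors.

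For the lower bound, I would show that each of the three quantities, when finite, is witnessed by an explicit play sequence on the first $s$ cards satisfying the constraints of $DP[\cdot]$. For $\mathcal{D}[\cdot]$ I take a sequence realizing $DP[s-1,\mathcal{H},p_1,\dots,p_{c-1}]$ and append ``discard the $s$-th card''. For $\mathcal{S}[\cdot]$ (the case $(a_s,k_s)\in\mathcal{H}$) I take a realizing sequence for the same table at $s-1$ with hand $\mathcal{H}\setminus\{(a_s,k_s)\}$ and append ``store''; the hand-size bound survives because the predecessor held one fewer card. For $\mathcal{P}[\cdot]$ I use the cascade rule: in the sub-case $k_s<c,\ a_s\le p_{k_s}$ I take the predecessor sequence whose $k_s$-count is $a_s-1$ and whose hand already contains the run $(a_s+1,k_s),\dots,(p_{k_s},k_s)$, play the $s$-th card, and then immediately play those stored cards in increasing order, ending in hand $\mathcal{H}$ with $k_s$-count $p_{k_s}$; the branch $k_s=c$ is identical with stored run $(a_s+1,c),\dots,(a_s+t,c)$ and produces $a_s+t$ color-$c$ cards. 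Taking the best of the three gives the lower bound.

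For the upper bound I would take an optimal play sequence $\pi$ realizing $DP[\cdot]$ and split on what $\pi$ does with the $s$-th card. If $\pi$ discards or stores it, deleting that single action leaves a valid sequence on the first $s-1$ cards matching $\mathcal{D}[\cdot]$ or $\mathcal{S}[\cdot]$. If $\pi$ plays it, the key observation is that any card of color $k_s$ with value above $a_s$ that is eventually counted (in $p_{k_s}$, resp. in the color-$c$ total) must already be in hand just before step $s$, since it is only playable after the $s$-th card and no deck card beyond position $s$ is available; hence the predecessor state has $k_s$-count $a_s-1$ and hand $\mathcal{H}$ augmented by exactly that run. For $k_s<c$ this count is a pinned index of $DP$, so the entry $\mathcal{P}[\cdot]$ directly dominates $DP[\cdot]$, and removing the (color-$k_s$) cascade does not delete any color-$c$ play.

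I expect the main obstacle to be the last-color branch, where the recurrence only counts a predecessor entry equal to \emph{exactly} $a_s-1$: if the predecessor maximum is strictly larger, the play-decomposition is not captured by $\mathcal{P}[\cdot]$ and I would fall back on the discard case. Concretely, a predecessor sequence attaining color-$c$ count $q>a_s-1$ while holding $(a_s+1,c),\dots,(a_s+t,c)$ can be rewritten, without touching any card beyond position $s-1$, by discarding the held copies that duplicate already-played values and playing from hand those that extend the current run; this yields a sequence on the first $s-1$ cards with hand $\mathcal{H}$ and color-$c$ count at least $a_s+t=DP[\cdot]$, so $\mathcal{D}[\cdot]$ already attains the value. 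The same rewriting shows that storing a playable card is never necessary, which is what makes the three cases jointly exhaustive. Combining the two bounds gives the claimed equality.
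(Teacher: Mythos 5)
Your proof is correct, and its skeleton is the same as the paper's: condition on whether the $s$-th card is discarded, stored, or played, and match the three cases against $\mathcal{D}[\cdot]$, $\mathcal{S}[\cdot]$, $\mathcal{P}[\cdot]$. The genuine difference is the point you flag as the ``main obstacle'': the exact-equality condition in the last-color branch. The paper's proof silently identifies ``prefix play sequences whose color-$c$ count is exactly $a_s-1$'' with ``table entries equal to $a_s-1$'', ignoring that $DP$ records a \emph{maximum}; conditioned on the play action, its claim $DP[\cdot]=\mathcal{P}[\cdot]$ is in fact false as literally stated. Concretely, with a single color, $h=2$ and $\sigma=(2,c),(1,c),(1,c),(2,c)$, the entry $DP[4,\emptyset]=2$ is attained by a sequence that plays the fourth card from the prefix state with count $1$ and empty hand, yet $DP[3,\emptyset]=2\neq 1$, so $\mathcal{P}[\cdot]=-\infty$; the lemma survives only because $\mathcal{D}[\cdot]=DP[3,\emptyset]=2$ covers the value --- which is precisely your fallback-to-discard rewriting (drop held copies of already-played values, play held cards that extend the current run, ending in hand $\mathcal{H}$ with count at least $a_s+t$). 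So your proposal does not merely reprove the lemma; it closes a real gap in the paper's argument, and that rewriting is the step that makes the upper bound go through. Two small repairs to your write-up: since in this model a card leaves the hand only by being played, phrase ``discarding the held copies'' as rewriting the sequence so that those copies are discarded at draw time instead of stored (hand sizes only shrink, so feasibility is preserved); and in your store case, if the predecessor hand already holds a copy of $(a_s,k_s)$, deleting the store action does not land on hand $\mathcal{H}\setminus\{(a_s,k_s)\}$ --- either use the paper's exchange argument (store the $s$-th copy instead of the earlier one) or observe that such a store is equivalent, as sets, to a discard and is therefore absorbed by $\mathcal{D}[\cdot]$.
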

\begin{proof}
To prove this we consider the three actions, \emph{discard}, \emph{store} and \emph{play}, and show that the maximum number of cards of color $c$ that we can play corresponds to $\mathcal{D}[\cdot], \mathcal{S}[\cdot]$ and $\mathcal{P}[\cdot]$ respectively.

\begin{description}
\item[$(a_s,k_s)$ is discarded.] When the last card in the play sequence is discarded, the entry of the table is the same as if we only allow the scanning of  $s-1$ cards. Thus, $DP[\cdot]= DP[s-1,\mathcal{H},p_1, \ldots, p_{c-1}]=\mathcal{D}[\cdot]$.

\item[$(a_s,k_s)$ is stored] We claim that in this case $DP[\cdot] = S[\cdot]$. First observe that the $s$-th card must appear in $\mathcal{H}$ as we store it in hand. Then further observe that we have to consider only play sequences in which, after playing the $(s-1)$-th card, a card with value and color $(a_s,k_s)$ is not stored in hand. Indeed, for any winning play sequence that has such a card stored after processing the $s-1$-th card, we can achieve the same result by not storing that card earlier and storing the $s$-th card instead. Therefore it suffices to consider play sequences that end at the $(s-1)$-th card and in which $(a_s,k_s)$ is not stored in hand, as defined in $\mathcal{S}[\cdot]$. 

\item[$(a_s,k_s)$ is played] In this case we claim that $DP[s,\mathcal{H},p_1, \ldots, p_{c-1}]=\mathcal{P}[\cdot]$. We consider the three cases that make up the definition of $\mathcal{P}[\cdot]$. 

\begin{description}
\item[$k_s <c$ and $ a_s>p_{k_s}$] Recall we need to play only up to value $p_{k_s}$ in color $k_s$ and the $s$-th card is of higher value. Therefore, playing this card cannot satisfy the constraint of playing exactly up to value $p_{k_s}$, and the entry should be set to $-\infty$.

\item[$k_s <c$ and $a_s\leq p_{k_s}$] In this case we can safely assume the other colors are not affected. After all, if we can play more cards in any color other than $k_s$, we could have played those earlier. So we need to consider cards of only color $k_s$. Moreover, to play the $s$-th card, we need that cards of this color of value up to $a_s-1$ have already been played. (Again we can safely assume that we need not play these from hand now as they could have been played earlier in that case.) So we are interested in play sequences that consider the first $s-1$ cards and have played up to value $a_s-1$ in color $k_s$. Now to ensure we satisfy the constraint of playing up to value $p_{k_s}$ we need that all cards of value $a_{s}+1$ up to $p_{k_s}$ are already in hand. So to find the correct entry we can simply look in the $DP$-table for the entry where we processed $s-1$ cards, where we have the cards $\mathcal{H} \cup \{(a_s+1, k_s), \ldots, (p_{k_s}, k_s)\}$ in hand and where we played cards up to value $p_i$ in color~$i$ for all $1 \leq i \leq k_s$ and $k_s + 1 \leq i \leq c-1$ and up to value $a_s-1$ in color $p_{k_s}$ as the definition of $\mathcal{P}[\cdot]$ stipulates.

\item[$k_s=c$] This case is intuitively similar to the previous two, but we need to handle it differently because constraints on color~$c$ are encoded differently in the table. As before we can safely assume that we are not playing any cards of other colors together with the $s$-th card. Since we will be playing a card of color $c$ with value $a_c$ we need that a card of color $c$ and value $a_{s}-1$ is already played, but one of value $a_s+1$ is not. In $DP$ that means we are interested in entries in $DP$ of value $a_s-1$. Now the entry for $DP[\cdot]$ should be the highest value in color $c$ that we can play, so we should find the maximum number of cards of color $c$ that we can play after the $s$-th card, which must already reside in hand. To summarize we must find the maximum number of cards we can play in color $c$, where the cards of value $a_s+1$ and higher must reside in hand under the constraint that we only processed the first $s-1$ cards and have played up to value $p_i$ in color $i$ for $1 \leq i \leq c-1$ and up to value $a_s-1$ in color $c$, again precisely as the definition of $\mathcal{P}[\cdot]$ stipulates. 
\end{description}
\end{description}

Thus, for the entry for $DP[\cdot]$ for each of the three actions, discard, store and play, we showed that the maximum number of cards played in color $c$ is as defined in $\mathcal{D}[\cdot], \mathcal{S}[\cdot]$ and $\mathcal{P}[\cdot]$ respectively. Since these are the only valid actions it follows that $DP[\cdot] = \max(\mathcal{P},\mathcal{S},\mathcal{D})$.
\end{proof}

\begin{theorem}\label{theo_dp}
We can solve a {\sc Hanabi} problem instance in $O(N(h^2+hc)c^hn^{h+c-1})$ time using $O(c^hn^{h+c-1})$ space.
\end{theorem}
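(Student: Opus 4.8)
The plan is to bound separately the number of table entries and the cost of filling a single entry, and then to combine these using the fact that the recurrence only reaches back one step in the parameter $s$.

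First I would count the entries of a single layer (fixed $s$). A state is determined by the hand $\mathcal{H}$ together with the play-counts $p_1,\ldots,p_{c-1}$. Since storing two copies of the same value–color pair is never beneficial (one copy suffices to play a value), the model already treats $\mathcal{H}$ as a \emph{set} of at most $h$ cards drawn from the $nc$ distinct value–color pairs; there are $\sum_{i=0}^{h}\binom{nc}{i}=O((nc)^h)=O(c^h n^h)$ such sets. Each $p_i$ ranges over $\{0,\ldots,n\}$, contributing a factor $(n+1)^{c-1}=O(n^{c-1})$. Hence a layer holds $O(c^h n^{h+c-1})$ entries. Because the recurrence $DP[\cdot]=\max(\mathcal{P}[\cdot],\mathcal{S}[\cdot],\mathcal{D}[\cdot])$ proved above refers only to entries whose first parameter is $s-1$, I would fill the table in increasing order of $s$ and keep only the current and previous layers in memory, discarding layer $s-1$ once layer $s$ is complete. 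This already yields the claimed space bound $O(c^h n^{h+c-1})$.

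Next I would bound the time to compute one entry. The discard and store cases each require a single table lookup, and the play case with $k_s<c$ requires forming one modified hand and performing one lookup; each such lookup costs $O(h+c)$, since addressing a cell means specifying a hand of size $O(h)$ together with the vector $p_1,\ldots,p_{c-1}$ of length $c-1$. The dominant case is $k_s=c$, where we maximise over $t\in\{0,\ldots,h\}$: this performs $O(h)$ lookups, each of cost $O(h+c)$, for a total of $O(h(h+c))=O(h^2+hc)$ per entry. Multiplying the $N$ values of $s$ by the per-layer entry count and the per-entry cost gives the running time $O\!\left(N(h^2+hc)c^h n^{h+c-1}\right)$. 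Correctness of every entry follows by induction on $s$ from the recurrence, once the base cases (small $s$, resolved by brute force in constant time, and $|\mathcal{H}|>h$, set to $-\infty$) are in place; the instance is a YES-instance exactly when, after processing all $N$ cards, every color is complete, i.e.\ $DP[N,\emptyset,n,\ldots,n]=n$, which is read off in constant time (an empty final hand loses nothing, as leftover stored cards are never needed).

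I expect the main obstacle to be honest bookkeeping rather than any deep idea. The two delicate points are: justifying that $\mathcal{H}$ ranges only over size-$\le h$ \emph{subsets} of the $nc$ pairs, so that $O(c^h n^h)$ is the correct count; and accounting for the $O(h+c)$ cost of addressing a single cell, since it is precisely this addressing cost, multiplied by the $O(h)$ candidate values of $t$ in the $k_s=c$ branch, that produces the $(h^2+hc)$ factor. One should also confirm that the rolling two-layer array genuinely suffices — that no entry of layer $s$ is ever consulted again once layer $s+1$ has been built — which is immediate from the shape of the recurrence.
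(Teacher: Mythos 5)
Your proposal is correct and follows essentially the same route as the paper's proof: the same count of $\sum_{i\le h}\binom{nc}{i}\cdot n^{c-1}=O(c^h n^{h+c-1})$ states per layer, the same two-layer rolling array for the space bound, and the same per-entry cost analysis in which the $O(h+c)$ addressing cost times the $O(h)$ lookups of the $k_s=c$ branch yields the $O(h^2+hc)$ factor. The only cosmetic difference is that the paper additionally notes an $O(\log h)$ cost for keeping the hand sorted when modifying it, which is dominated by the $O(h+c)$ addressing term and so does not change the bound.
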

\begin{proof}
By definition, there is a solution to the {\sc Hanabi} problem instance if and only if its associate table satisfies $DP[N,\emptyset,n,\ldots,n]=n$. Each entry of the table with first parameter $s$ is solved by querying entries with first parameter $s-1$, so we can compute the whole table in increasing order. 

Recall that, for those entries of the table for which the associated set $\mathcal{H}$ has more than $h$ elements the answer is trivially $-\infty$ (since we cannot store that many cards). That means that for the parameter $\mathcal{H}$ we have $\sum_{i\leq h}{nc \choose i}$ sets to consider. In total we will compute $N \cdot \sum_{i\leq h}{nc \choose i} \cdot n^{c-1} = O(Nc^hn^{h+c-1})$ entries. Note that we only need to store entries of at most two columns in the table (first parameter $s$ and $s-1$), so the total storage required is $O(c^hn^{h+c-1})$.

To compute a single entry of the table (say, $DP[s, \mathcal{H}, p_1, \ldots,$ $p_{c-1}]$) we must first compute $\mathcal{D}[\cdot], \mathcal{S}[\cdot]$ and $\mathcal{P}[\cdot]$. We can compute $\mathcal{D}[\cdot]$ and $\mathcal{S}[\cdot]$ with a constant number of queries to the table. For each query we must update the set of cards in hand. As we are adding or removing only one card from $\mathcal{H}$ and its maximum size is $h$ we can do this in $O(\log h)$ time. (We need $O(\log h)$ and not $O(1)$ due to the fact that the cards in $\mathcal{H}$ need to be stored in increasing order, to avoid considering all possible orderings of the same set of cards.) Next we need to find the correct entry in the table. However, the combined space of all table indices is not constant. Specifically, we can represent the hand $\mathcal{H}$ with $O(h)$ words of memory (assuming each card identifier can be stored in $O(1)$ memory) and we need an additional $O(c)$ memory to store a number between $1$ and $n$ for each color parameter $p_1, \ldots, p_{c-1}$. Therefore, the total size of the memory address for any table entry is $O(h+c)$, so we can access it in $O(h+c)$ time.
To compute $\mathcal{P}$ we query up to $O(h)$ entries in the table, resulting in $O(h^2+hc)$ time in total.

Since each entry can be computed in at most $O(h^2+hc)$ time, the total time to compute all entries is $O(N(h^2+hc)c^hn^{h+c-1})$.
\end{proof}

{\bf Remark} In principle, the $DP$ table returns only whether or not the instance is feasible. We note that, we can also find a winning play sequence with standard backtracking techniques.

\section{NP-Hardness (Multiple Colors, Multiple Appearances) }\label{sec_hard}
In this section we prove hardness of the general {\sc Hanabi} problem. As mentioned in the introduction, the problem is NP-complete even if $h$ and $r$ are small constants. Specifically, we will prove the following theorem.

\begin{theorem}\label{theo_nphard}
The {\sc Hanabi} problem is NP-complete for any $r\geq 2$ and $h\geq 2$, as well as for $r\geq 3$ and $h=1$.
\end{theorem}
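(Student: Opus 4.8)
The plan is to prove both membership in NP and NP-hardness. Membership is the easy direction: a certificate is just the list of actions (\emph{discard}, \emph{store}, or \emph{play}) taken for each of the $N$ cards as they are drawn, together with the order in which stored cards are released from hand. This description has size $O(N)$, and a single left-to-right simulation of $\sigma$ verifies in polynomial time that the hand never holds more than $h$ cards and that each value $1,\dots,n$ is eventually played exactly once in each of the $c$ colors. For hardness I would give a polynomial-time reduction from a $3$-SAT variant (I expect a restricted form such as monotone or $1$-in-$3$ SAT to keep the wiring clean), mapping a formula $\varphi$ to a hand size and a card sequence $\sigma$ that admits a winning play sequence if and only if $\varphi$ is satisfiable. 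Since the theorem fixes only $h$ and $r$, I am free to use as many colors and values as the gadgets require.

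The core of the construction is a \emph{variable gadget} that forces an irrevocable binary decision, which I would realize by exploiting the tension between the multiplicity $r$ and the hand limit $h$. Concretely, for the base case $h=2,\ r=2$, the relevant card of a variable $x_i$ occurs exactly twice, at two separated positions in $\sigma$, while the hand is kept (nearly) saturated by other pending obligations at both of those moments. The player can therefore commit to using the early copy or the late copy but not hedge by keeping both; this choice encodes $x_i=\mathrm{true}$ versus $x_i=\mathrm{false}$. Discarding both copies, or attempting to store an extra one, should provably break the increasing-order requirement for that color later on, so that only clean $0/1$ assignments survive. Each \emph{clause gadget} is then placed further along $\sigma$ and is arranged so that its color can be advanced in the required increasing order, at the moment the relevant cards are drawn, only if at least one of its three literals was set the satisfying way by the corresponding variable gadget. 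The intended coupling is that a satisfied literal frees exactly the hand slot (or leaves available exactly the spare copy) that the clause needs precisely when it needs it, whereas an all-false clause forces the player either to exceed the hand size or to need a card that has already been discarded.

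Correctness then splits into the two usual directions: from a satisfying assignment one follows the induced schedule and checks that the hand bound is respected throughout; conversely, from any winning play sequence one reads a consistent truth assignment off the variable gadgets and argues that each clause gadget could only have been completed because one of its literals was satisfied. Two points remain to cover the full parameter range. To handle all $h\ge 2$ rather than just $h=2$ I cannot invoke monotonicity, since a larger hand only helps the player; instead I would pad $\sigma$ with $h-2$ auxiliary cards that are forced to occupy hand slots for essentially the entire game (for example a short per-slot chain whose predecessor value appears only near the end of $\sigma$), thereby reducing the \emph{effective} free hand back to the base case, and analogously inflate $r$ with redundant, never-useful copies placed where they can never be stored advantageously. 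The regime $h=1,\ r=3$ needs a separate but parallel design: with a single hand slot the binary choice must be carried by \emph{which} of the three available copies is played rather than by keeping versus discarding, so the variable and clause gadgets have to be retuned to this tighter budget.

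The step I expect to be hardest is exactly this gadget engineering at such small constants: guaranteeing that the storage/discard dynamics enforce the Boolean logic with no unintended ``cheating'' play, and that the hand-size threshold is met precisely at the satisfiability boundary regardless of how the player interleaves the processing of unrelated gadgets. Making the clause gadget robustly reject an all-false clause, independently of scheduling order, will be the most delicate part of the argument.
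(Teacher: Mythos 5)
Your overall skeleton --- NP membership via an action-list certificate, a 3-SAT reduction built from variable and clause gadgets that exploit the tension between multiplicity and hand size, padding with forced-to-hold cards to bring $h>2$ down to the base case, and redundant copies to inflate $r$ --- matches the paper's strategy, and those peripheral parts are sound (the paper handles larger $h$ and $r$ in essentially the way you describe). The genuine gap is that the gadgets themselves, which are the entire content of the hardness proof, are never constructed: you describe what they should accomplish (``a satisfied literal frees exactly the hand slot \dots precisely when it needs it'') but give no card sequences and no argument that any sequence has this behavior, and you explicitly defer this as the hardest step. Moreover, the one concrete mechanism you do sketch for the variable gadget --- a single color whose doubled card is played either early or late --- does not by itself encode a persistent bit: once both copies have passed, the game state (values played per color, cards in hand) is identical under either choice, unless other cards placed between the two positions cash in the early play, and you supply none of that machinery. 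The paper's variable gadget sidesteps this by using \emph{two} colors per variable together with a dummy-color hand-reduction gadget: in the sequence $2,\overline{2},1,3,4,5,\overline{1},\overline{3},\overline{4},\overline{5}$ (plain and overlined values being colors $2i-1$ and $2i$), both value-$2$ cards precede the corresponding $1$s, so with the effective hand size cut to one, exactly one of the two colors can ever reach value $5$; the assignment is thus recorded permanently as a progress difference between paired colors, which is what the clause gadgets later test.

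Similarly, a clause gadget needs a concrete interleaving with a provable ``at most two out of three'' bottleneck; in the paper, the three colors associated with a clause each need to store a card of value $o_j+3$ that arrives before the $o_j+2$ it depends on, so with hand size two at most two lagging colors can catch up --- hence all three advance if and only if at least one of them had already advanced (i.e., its literal is true). Finally, your plan for the $h=1$, $r=3$ regime (``retune the gadgets to the tighter budget'') is again a placeholder; the paper's actual solution is to nearly duplicate the second half of the clause gadget so that one pass at hand size one simulates a pass at hand size two, which is precisely why the multiplicity rises to three there. Without these constructions and their two-directional correctness arguments, what you have is a plan for a proof rather than a proof.
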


We first prove the statement for $r=2$, $h=2$ and then show how to generalize it for other values of $r$ and $h$. Our reduction is from {\sc 3-SAT}. Given a {\sc 3-SAT} problem instance with $v$ variables $x_1,\ldots, x_v$ and $m$ clauses $W_1, \ldots, W_m$, we construct a {\sc Hanabi} sequence $\sigma$ with $2v+1$ colors, $n=6m+2$, $r=2$, $h=2$ and $N\le 2(2v+1)(6m+2)$.
Before discussing the proof, we provide a birds-eye view of the reduction. The generated sequence will have a variable gadget $V_i$ for each variable $x_i$ and a clause gadget $C_j$ for each clause $W_j$. The general idea is that each variable will be represented by two colors, one corresponding to a true assignment and one to a false assignment of the variable. We then ensure that at the start of the game in each color, only one of these two colors can progress to a certain value (not both). The clause gadgets then ensure that only if at least one of its literals is true, then all variable colors can make equal progress, otherwise, some color will not be able to progress. At the end, all colors can play their final cards if and only if each color has made progress during each clause gadget. Now we describe our gadgets more precisely.
\medskip

\noindent \textbf{Handsize alteration gadgets.} In some parts of our construction we may want to ensure that the player cannot store in its hand cards of $2v$ variable colors between gadgets. To enforce this we use a dummy color $d$ and we can add cards $(i+1,d), (i+2,d), (i,d)$ to the sequence when the dummy color has progressed up to value $(i-1)$. The dummy color will have only one card for each value, hence each card must be either stored or played. In order to play all cards in the dummy color we must store $(i+1,d)$ and $(i+2,d)$ in hand until we encounter $(i,d)$, effectively preventing from other cards being stored. We call this a {\em hand dump} gadget. By placing cards between $(i+1,d)(i+2,d)$ and $(i,d)$ we can ensure the handsize is reduced to zero for that interval. A handsize of one is achieved in a similar way by using two cards ($(i+1,d)$ and $(i,d)$) instead. We call this the {\em hand reduction} gadget.

\noindent \textbf{Assumptions.} From now on, for ease of description, we only consider play sequences that play all cards in the dummy color. Since each value appears exactly once in the dummy color, if any of them is not played then the play sequence cannot be a winning one. Similarly, we assume cards are played as soon as possible. In particular, if the card that is currently being scanned is playable, then it will be immediately played. We can make this assumption because holding it in hand or discarding it when it could be played is never beneficial. 

\noindent \textbf{Variable gadget.}
 For any $i\leq v$, variable gadget $V_i$ is defined as the sequence $V_i=2,\overline{2},1,3,4,5,\overline{1},\overline{3},\overline{4},\overline{5}$, where overlined values are cards of color $2i$, whereas the other cards have color $2i-1$, the numbers indicate their value.
The first part of the {\sc Hanabi} problem instance $\sigma$ simply consists of the concatenation of all gadgets $V_1$, $\ldots,$ $V_v$, adding card $(2,2v+1)$ in the very beginning and card $(1,2v+1)$ in the very end of the sequence, as to form a hand reduction gadget (see Figure~\ref{fig:variablegad}). We call this sequence $\sigma_1$.

\begin{figure}
  \centering
    \begin{tabular}{p{0.3cm}cccp{0.3cm}c}
    2&\textcolor{red}{2} \textcolor{blue}{2} \textcolor{red}{1} \textcolor{red}{3} \textcolor{red}{4} \textcolor{red}{5} \textcolor{blue}{1} \textcolor{blue}{3} \textcolor{blue}{4} \textcolor{blue}{5} 
    &\textcolor{green}{2} \textcolor{cyan}{2} \textcolor{green}{1} \textcolor{green}{3} \textcolor{green}{4} \textcolor{green}{5} \textcolor{cyan}{1} \textcolor{cyan}{3} \textcolor{cyan}{4} \textcolor{cyan}{5} 
    &\textcolor{magenta}{2} \textcolor{Orange}{2} \textcolor{magenta}{1} \textcolor{magenta}{3} \textcolor{magenta}{4} \textcolor{magenta}{5} \textcolor{Orange}{1} \textcolor{Orange}{3} \textcolor{Orange}{4} \textcolor{Orange}{5} &1\\
    d&1 2 1 1 1 1 2 2 2 2 &3 4 3 3 3 3 4 4 4 4&5 6 5 5 5 5 6 6 6 6&d\\
    \end{tabular}    
    \caption{Sequence $\sigma_1$ for a SAT instance with three variables. The upper row represents the values of the cards whereas the lower one represents the color of each card. Note that the dummy cards to reduce hand size are also added (color ``d'' stands for dummy color $2v+1$).}
    \label{fig:variablegad}
\end{figure}

\begin{lemma}\label{lem_init}
There is no valid play sequence of $\sigma_1$ that can play cards of value $2$ of colors $2i-1$, $2i$ and the dummy color $2v+1$. This statement holds for all $i\leq v$. 
\end{lemma}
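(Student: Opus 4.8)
The plan is to argue by contradiction, exploiting the fact that the hand reduction gadget built from the dummy color permanently shrinks the effective hand capacity to one, while each variable gadget demands two free slots if both of its value-$2$ cards are to be played.

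First I would pin down the effect of the hand reduction gadget. The card $(2,2v+1)$ sits at the very front of $\sigma_1$ and its matching $(1,2v+1)$ is the very last card. Since a card of value $2$ can only be played after the value-$1$ card of the same color, $(2,2v+1)$ cannot be played until the final card is reached. In any play sequence that plays $(2,2v+1)$, this card must therefore be held in hand continuously while all of $V_1,\ldots,V_v$ are processed; and because each value-color combination occurs exactly once in $\sigma_1$, it cannot be discarded and later recovered. Thus the dummy card permanently occupies one of the $h=2$ hand slots throughout $\sigma_1$, leaving only a single free slot.

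Next I would localize the contradiction to a single gadget $V_i$ and a single instant of time. Within $V_i$ the two value-$2$ cards $(2,2i-1)$ and $(2,2i)$ appear as the first and second cards, while the value-$1$ cards $(1,2i-1)$ and $(1,2i)$ appear only later (the third and seventh positions of the gadget). Consequently neither value-$2$ card can be played on arrival, and---again using that each appears only once in $\sigma_1$, so discarding is irreversible---each must be stored from the moment it is drawn until its matching value-$1$ card is played. I would then examine the instant immediately after the second card of $V_i$ has been processed: at that point neither value-$1$ card has yet been drawn, so if the play sequence is to play both $(2,2i-1)$ and $(2,2i)$, both must already reside in hand. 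Together with the dummy card $(2,2v+1)$ this forces three cards into hand, contradicting $h=2$.

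Hence no valid play sequence can play all three of $(2,2i-1)$, $(2,2i)$, and $(2,2v+1)$, and since $i$ was arbitrary the statement holds for every $i\le v$. I do not anticipate a genuine obstacle; the only points needing care are justifying that each of the three cards must physically occupy a hand slot at the critical instant---which rests entirely on the single-occurrence property of $\sigma_1$---and correctly identifying that instant, namely right after the second card of the gadget, before either value-$1$ card becomes available. Note also that this argument quantifies over \emph{all} valid play sequences and does not rely on the ``play as soon as possible'' assumption stated earlier.
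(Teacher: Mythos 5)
Your proof is correct and takes essentially the same approach as the paper's: the card $(2,2v+1)$ must occupy one hand slot throughout $\sigma_1$ (since $(1,2v+1)$ is the last card and no card repeats), and both $(2,2i-1)$ and $(2,2i)$ must be stored simultaneously before either value-$1$ card appears, forcing three cards into a hand of size $h=2$. Your version merely makes explicit the critical instant (just after the second card of $V_i$) that the paper's more terse argument leaves implicit.
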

\begin{proof}
Assume, for the sake of contradiction, that there exists some $i \leq v$ and a play sequence for which we can play the three cards. In order to play card $(2,2v+1)$ we need to store it in the very beginning of the game enforcing the hand reduction gadget for the duration of the variable assigning phase, thus temporarily reducing the hand-size to one.

Further notice that each card appears exactly once in $\sigma_1$ (that is, the multiplicity of this part is equal to $1$), and that the cards of color $2i-1$ and $2i$ only appear in gadget $V_i$. More importantly, the value $2$ in both colors appears before the value $1$ in the respective colors. In particular, both must be stored before they are played. However, this is impossible, since we have decreased the hand size through the hand reduction gadget.
\end{proof}

Thus, the best we can do after scanning through all variable gadgets is to play five cards of either color $2i-1$ or $2i$ and only one card of the other color. This choice is independent for all $i\leq v$, hence we associate a truth assignment to a play sequence as follows: we say that variable $x_i$ is set to {\em true} if, after $\sigma_1$ has been scanned, the card $(5,2i-1)$ has been played, false if $(5,2i)$ has been played. For well-definement purposes, if neither $(5,2i-1)$ or $(5,2i)$ is played we simply consider the variable as {\em unassigned} (and say that an unassigned variable never satisfies a clause). Again, this definition is just used for completeness since, as we will see later, no variable will be unassigned in a play sequence that plays all cards.
\medskip

\noindent \textbf{Clause gadget.} We describe the gadget $C_j$ for clause $W_j$. We associate three colors to a clause. Specifically, we associate color $2i-1$ with $W_j$ if $x_i$ appears positive in $W_j$. If $x_i$ appears in negated form, we instead associate color $2i$ with $W_j$. Since each clause contains three distinct literals (each literal associated to a distinct variable), it will be associated to three distinct colors. 

Let $o_j=5(j-1)$. Intuitively speaking, $o_j$ indicates how many cards of each color can be played (we call this the {\em offset}). Our invariant is that for all $i\leq v$ and $j\leq m$, before scanning through the clause gadget associated to $W_j$, there is a play sequence that plays up to $o_j+1$ cards of color $2i-1$ and $o_j+5$ of color $2i$ (or the reverse) and no play sequence can exceed those values in any color. Observe that the invariant is satisfied for $j=1$ by Lemma~\ref{lem_init}.

For clause gadget $C_j$ we want to ensure that, if the assignment makes the clause true, then each color that had advanced to at least $o_j+5$ before this gadget can now advance to $o_{j+1} + 5 = o_j+10$ and every color that had advanced to $o_j+1$ can now advance to $o_{j+1}+1 = o_j+6$. Otherwise, one of the associated colors should not be able to advance beyond $o_j+2$.

The clause gadget is constructed as follows. First we add the sequence $o_j+6,o_j+7,o_j+8,o_j+9,o_j+10$ for the three colors associated to $W_j$, ensuring that, among clause literals, those that had advanced to $o_j+5$ can now advance to $o_j+10$.
Then we append the sequence with $o_j+5,o_j +6,o_j+7,o_j+8,o_j+9,o_j+10,o_j+2,o_j+3,o_j+4$ in all colors corresponding to non-literals (except the dummy color). This allows a five-card advancing in those other colors, that is, colors where we had played a value of $o_j+5$ should advance to $o_j+10$ whereas those where we had played up to value $o_j+1$ should advance to $o_j+6$ (observe that the $(o_j+5)$ and $(o_j+6)$ cards can be stored in hand).
Next, we add three cards of the dummy color forming a hand dump gadget.
Finally, we add the sequence $o_j+3,\overline{o_j+3},\overline{\overline{o_j+3}}, o_j+2,o_j+4,o_j+5,o_j+6, \overline{o_j+2},\overline{o_j+4},\overline{o_j+5},\overline{o_j+6}$, $\overline{\overline{o_j+2}}, \overline{\overline{o_j+4}},\overline{\overline{o_j+5}},\overline{\overline{o_j+6}}$ in the three colors associated to $W_j$. As before, no-, single- and double-overline on the numbers indicate the three different colors. See Figure~\ref{fig:clausegad} for an illustration. 

\begin{figure}[tb]
  \centering
\resizebox{\textwidth}{!}{
\begin{tabular}{p{1.5cm} c p{3.6 cm} p{2 cm}ccccc}
& \multicolumn{7}{c}{$C_1$} \\
& \multicolumn{7}{c}{$\overbrace{\rule{14.5cm}{0pt}}$} \\
4  5  3 & 6 7 8 9 10 &  5 6 7 8 9 10 2 3 4 & 7  8  6 & \textcolor{red}{3}  \textcolor{cyan}{3}  \textcolor{magenta}{3} & \textcolor{red}{2}  \textcolor{red}{4}  \textcolor{red}{5}  \textcolor{red}{6} & \textcolor{cyan}{2}  \textcolor{cyan}{4}  \textcolor{cyan}{5}  \textcolor{cyan}{6} & \textcolor{magenta}{2}  \textcolor{magenta}{4}  \textcolor{magenta}{5}  \textcolor{magenta}{6} \\
 d  d  d &1,4, and 5 & 2, 3, and 6 &d d d&1 4 5&1 1 1 1&4 4 4 4&5 5 5 5\\
& \multicolumn{7}{c}{$C_2$} \\
& \multicolumn{7}{c}{$\overbrace{\rule{15cm}{0pt}}$} \\
10  11  9 & 11 12 13 14 15  &  10 11 12 13 14 15 7 8 9 & 13  14  12 & \textcolor{red}{8}  \textcolor{green}{8}  \textcolor{Orange}{8} & \textcolor{red}{7}  \textcolor{red}{9}  \textcolor{red}{10}  \textcolor{red}{11} & \textcolor{green}{7}  \textcolor{green}{9}  \textcolor{green}{10}  \textcolor{green}{11} & \textcolor{Orange}{7}  \textcolor{Orange}{9}  \textcolor{Orange}{10}  \textcolor{Orange}{11}\\ d  d  d & 1,3, and 6  & 2, 4, and 5 &d d d&1 3 6&1 1 1 1&3 3 3 3&6 6 6 6
\end{tabular}
}
    \caption{Sequence $\sigma_2$ for a SAT instance with three variables $x_1,x_2,x_3$ and two clauses $W_1 = (x_1 \lor \lnot x_2 \lor x_3), W_2 = (x_1 \lor x_2 \lor \lnot x_3$). Colors 1, 4, 5 are associated to $W_1$ and colors 1, 3, 6 are associated to $W_2$. The upper row represents the values of the cards whereas the lower one the color of each card. Note that the dummy cards to obtain independence between/inside gadgets are also added (color ``d'' stands for dummy color $2v+1$).}
\label{fig:clausegad}
\end{figure}

Let $\sigma_2$ be the result of concatenating all clause gadgets in order, where before each $C_j$ we add three cards of the dummy color forming a hand dump gadget to make sure that no card from one gadget can be saved to the next one (see Figure~\ref{fig:overall}). Further let $\sigma'$ be the concatenation of $\sigma_1$ and $\sigma_2$.  
\begin{figure}
  \centering
    \begin{tabular}{p{0.5cm}ccp{0.5cm}p{0.5cm}ccp{0.5cm}ccccp{1cm}cc}
    $\sigma_1$&4&5&3&$C_1$&10&11&9 & $C_2$ & \ldots & $6m - 2$ & $6m-1$ &$6m-3$ & $C_m$   \\
    \end{tabular}
    \caption{Overall picture of the reduction. All cards depicted have dummy color (and are only used to obtain independence between gadgets).}
\label{fig:overall}
\end{figure}
We must show that, when scanning a clause gadget that is satisfied, we can make the correct amount of progress in each color (which is five additional values played). We show this first for colors corresponding to literals not belonging in the clause in Lemma~\ref{lem_assigned_color_1}~and~\ref{lem_assigned_color_2}. Then we proceed with the remaining three colors in Lemma~\ref{lem_invariant}. For each of the three cases note that any card played while processing a clause gadget $C_j$ must occur within that clause due to the presence of hand dump gadgets between any two clause gadgets.

\begin{lemma}\label{lem_assigned_color_1}
Let $k \leq 2v$ be any color for which we have played exactly up to value $o_j+5$ before processing $C_j$ for any $j \leq m$. Then we can play exactly up to $o_j+10 = o_{j+1}+5$ in color $k$ during the processing of $C_j$.
\end{lemma}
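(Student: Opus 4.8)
The plan is to construct an explicit legal play of the color-$k$ cards occurring inside $C_j$ that reaches value $o_j+10$, and then to argue that no higher value is reachable within $C_j$. Following the roadmap, I read the statement for a color $k$ that is \emph{not} one of the three colors associated with $W_j$ (a non-literal color), so that its only appearances in $C_j$ are in the non-literal block, namely the values $o_j+5,o_j+6,o_j+7,o_j+8,o_j+9,o_j+10,o_j+2,o_j+3,o_j+4$ in that order (the leading block $o_j+6,\dots,o_j+10$ and the final block only carry the three associated colors). Before running the play I would record two structural facts: the hand-dump gadget inserted immediately before $C_j$ prevents any variable-color card from being carried into $C_j$, so the hand is effectively empty when scanning of $C_j$'s body begins; and, by the standing assumption, every dummy card is eventually played, keeping consecutive gadgets independent.

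Next I would trace the play under the hypothesis that color $k$ has been advanced \emph{exactly} to $o_j+5$ on entry to $C_j$. Because $r=2$ and value $o_j+5$ of color $k$ was already played (in the previous gadget, or in $\sigma_1$ when $j=1$), the leading card $(o_j+5,k)$ is a second copy and hence unplayable, so it is discarded. The card $(o_j+6,k)$ is now playable; after playing it the cards $(o_j+7,k),(o_j+8,k),(o_j+9,k),(o_j+10,k)$ become playable one after another and are played in turn, bringing color $k$ up to $o_j+10=o_{j+1}+5$. The trailing cards $(o_j+2,k),(o_j+3,k),(o_j+4,k)$ are likewise second copies of already-played values and are discarded. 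A point worth emphasizing is that this sub-play never stores a card, so it consumes none of the hand capacity and is trivially compatible with the plays of the other colors inside $C_j$.

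To finish I would verify the word \emph{exactly}. Since the hand is empty on entry and the largest value of color $k$ appearing anywhere in $C_j$ is $o_j+10$, no card of color $k$ of value exceeding $o_j+10$ is ever available, so no play sequence confined to $C_j$ can push color $k$ beyond $o_j+10$; combined with the construction above this shows the reachable value is precisely $o_j+10$.

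I expect the only delicate point to be the justification that the four duplicated cards are genuinely disposable, which rests squarely on the inductive invariant that color $k$ stood at \emph{exactly} $o_j+5$ before $C_j$ (so that their first copies were already consumed) together with $r=2$. There is no real combinatorial obstacle here: a non-literal color is a pure pass-through whose advance is independent of whether $W_j$ is satisfied. The genuine clause logic---where satisfaction of $W_j$ controls the progress of the three associated colors---is deferred to Lemma~\ref{lem_invariant}, and the companion ``behind'' case ($o_j+1\to o_j+6$, which does use hand storage) is handled in Lemma~\ref{lem_assigned_color_2}.
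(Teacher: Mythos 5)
Your core argument is the same as the paper's: inside $C_j$ the values $o_j+6,o_j+7,o_j+8,o_j+9,o_j+10$ occur as a consecutive run, so a color standing at $o_j+5$ can play all five without storing anything in hand, and exactness follows because no card of value $o_j+11$ or higher appears in $C_j$. Your additional bookkeeping (discarding the duplicate $o_j+5$ and the trailing $o_j+2,o_j+3,o_j+4$, which is legitimate since those values were already played and $r=2$, and the observation that the hand-dump gadget makes the hand empty on entry) is correct and slightly more explicit than the paper's one-line proof.

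However, there is a scope gap: you explicitly restrict the lemma to colors \emph{not} associated with $W_j$, deferring the associated colors to Lemma~\ref{lem_invariant}. The statement as written applies to \emph{any} color $k \leq 2v$ that has reached $o_j+5$, which includes the three colors associated with $W_j$ (the ``true literal'' colors), and the paper's proof covers them uniformly by noting that the run $o_j+6,\ldots,o_j+10$ appears consecutively in \emph{all} colors --- for the associated colors it is literally the opening block of $C_j$, and their later cards in the final block all have values at most $o_j+6$, so exactness holds there too. The omitted case is thus even simpler than the one you treated, but as written your proof does not establish the lemma in the generality in which it is stated (and in which it is invoked in Theorem~\ref{theo_equival}, where the advanced colors that must progress by five include the true-literal colors). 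One added sentence handling the associated colors would close the gap.
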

\begin{proof}
Observe that $C_j$ contains the sequence $o_j+6,o_j+7,o_j+8,o_j+9,o_j+10$ in consecutive fashion in all colors. Thus, the five cards can be played without having to store anything in hand. Also note that a sixth card cannot be played since $o_j+11$ is not present in any color in $C_j$.
\end{proof}

\begin{lemma}\label{lem_assigned_color_2}
Let $k \leq 2v$ be any color that is not associated with $C_j$ and for which we have played exactly up to value $o_j+1$ before processing $C_j$ for any $j \leq m$. Then we can play exactly up to $o_j+6 = o_{j+1}+1$ in color $k$ during the processing of $C_j$.
\end{lemma}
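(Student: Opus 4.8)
The plan is to argue directly about the cards of color $k$ that appear inside the gadget $C_j$. By construction, color $k$ is not associated with $W_j$, so it receives no cards in the first block (the cards $o_j+6,\dots,o_j+10$ reserved for the three associated colors) nor in the final block. Its only cards in $C_j$ are the nine cards of the non-literal sub-sequence
\[
o_j+5,\ o_j+6,\ o_j+7,\ o_j+8,\ o_j+9,\ o_j+10,\ o_j+2,\ o_j+3,\ o_j+4 .
\]
First I would note that, because hand dump gadgets separate consecutive clause gadgets, nothing of color $k$ is carried into $C_j$, and since each non-literal color's block can be emptied from hand by the end of its own processing, we may treat this block in isolation, entering it with an empty hand.

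For achievability, recall that before $C_j$ we have played color $k$ exactly up to value $o_j+1$, so the next playable value is $o_j+2$. I would store the first two cards $o_j+5$ and $o_j+6$, filling the hand. The next four cards $o_j+7,\dots,o_j+10$ are neither playable (we are still at $o_j+1$) nor storable (the hand is full), so they are discarded. The three remaining cards $o_j+2,o_j+3,o_j+4$ then arrive and are played immediately, bringing color $k$ to $o_j+4$; at that point the two stored cards $o_j+5$ and $o_j+6$ become playable and are played in turn, reaching exactly $o_j+6=o_{j+1}+1$ and leaving the hand empty (so the independence assumption used above is self-consistent).

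It remains to see that $o_j+6$ cannot be exceeded, which is what maintains the upper-bound half of the invariant. To reach value $o_j+7$ one would have to play $o_j+2,\dots,o_j+6$ and then $o_j+7$; but the single copy of $o_j+7$ present in $C_j$ arrives before any of $o_j+2,o_j+3,o_j+4$ can be played, so it would have to be stored. At that moment the only copies of $o_j+5$ and $o_j+6$ in $C_j$ must already be held in hand—they too precede $o_j+2$ and are indispensable for crossing from $o_j+4$ up to $o_j+6$—so holding $o_j+7$ as well would require three simultaneously stored cards, contradicting $h=2$. The main point needing care is precisely this counting of simultaneously held cards together with the claim that color $k$'s block is self-contained; once those are pinned down, achievability is entirely routine.
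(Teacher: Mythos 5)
Your proof is correct and takes essentially the same approach as the paper's: both achieve $o_j+6$ by storing $o_j+5$ and $o_j+6$ until $o_j+2,o_j+3,o_j+4$ have been played, and both cap progress at $o_j+6$ by observing that additionally retaining $o_j+7$ would require three simultaneously stored cards, contradicting $h=2$. Your write-up simply makes explicit the block-isolation argument (via the hand dump gadgets) and the card counting that the paper's proof dismisses as ``straightforward to verify.''
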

\begin{proof}
In this case, the cards of color $k$ appear in the following order: $o_j+5,o_j +6,o_j+7,o_j+8,o_j+9,o_j+10,o_j+2,o_j+3,o_j+4$. 
It is straightforward to verify that if we are allowed to store only two cards the best we can do is to store values $o_j+5$ and $o_j+6$ until $o_j+2,o_j+3,o_j+4$ have been played. Similarly $o_j+7$ cannot be saved as this would require storing three cards in hand.
\end{proof}

The remaining case is that a color $k$ is associated to $W_j$ and only $o_j+1$ cards have been played. Recall that, by the way in which we associated variable assignments and play sequences, this corresponds to the case that the assignment of variable $x_{\lceil k/2 \rceil}$ does not satisfy the clause $W_j$. We now show that five cards of color $k$ will be playable if and only if at least one of the other two variables satisfies the clause. 

\begin{lemma}\label{lem_invariant}
Let $C_j$ be the clause gadget associated to $W_j$ (for some $j\leq m$). We can play five cards in each of the three colors associated to $W_j$ if and only if we have played a card of value $o_j+2$ in at least one of the three associated colors before $W_j$ is processed.  Moreover, we can never play more than five cards in the three colors associated to $W_j$.
\end{lemma}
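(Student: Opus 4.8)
The plan is to reduce the whole statement to the behaviour of the last block of $C_j$, namely the run in the three associated colors that opens with the three cards $o_j+3,\overline{o_j+3},\overline{\overline{o_j+3}}$. First I would pin down the state of the three associated colors just before this block is read. By the hand dump gadget placed immediately in front of it, the hand is empty at that moment. An associated color whose literal is satisfied was at value $o_j+5$ before $C_j$ and, by Lemma~\ref{lem_assigned_color_1}, has already climbed to $o_j+10$ using the opening run $o_j+6,\dots,o_j+10$; an associated color whose literal is false still sits at $o_j+1$, since none of $o_j+2,\dots,o_j+5$ occurs in the opening run and the intervening hand dump forbids carrying anything over. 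So on entering the final block each associated color is at exactly $o_j+10$ or exactly $o_j+1$, the former being precisely the colors in which a card of value $o_j+2$ was played earlier.

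For the ``moreover'' clause I would observe that inside the final block the associated colors only carry values up to $o_j+6$, so a color at $o_j+1$ tops out at $o_j+6$ (five cards), while a color already at $o_j+10$ cannot move (every value in the block is below $o_j+10$ and $o_j+11$ is absent) and also gains exactly five over its pre-gadget value $o_j+5$. This yields the bound of five in each associated color. For the ``if'' direction, suppose some associated color reached $o_j+2$ earlier, i.e.\ stands at $o_j+10$. Then all five of its cards in the final block lie below $o_j+10$ and may simply be discarded, leaving the hand free for that color. Each of the two remaining associated colors (at $o_j+1$) need only buffer its single $o_j+3$ card until its own $o_j+2$ is read, after which $o_j+3,o_j+4,o_j+5,o_j+6$ are played as they arrive. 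At most two buffers are ever needed at once, so $h=2$ suffices and all three colors reach their targets; the case of two satisfied literals is strictly easier.

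The real work is the ``only if'' direction. Here all three associated colors are at $o_j+1$ when the block starts, and the three cards $o_j+3,\overline{o_j+3},\overline{\overline{o_j+3}}$ all precede any card of value $o_j+2$. I would argue that none of them is playable on arrival, since each color still needs its own $o_j+2$ first; hence, to ever advance past value $o_j+2$, a color must store its $o_j+3$ card. Storing all three simultaneously needs hand size three, contradicting $h=2$, so at least one $o_j+3$ card is forced to be discarded. The decisive step is showing this discard is fatal: the only still-reachable copy of that value in that color is the one in this very block, because $o_j+3=5j-2$ is strictly below the smallest value $o_{j'}+2\ge 5j+2$ appearing in any later gadget, and every earlier copy is sealed off by a preceding hand dump. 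Thus the stranded color plays $o_j+2$ but never $o_j+3$, halting at $o_j+2$ and gaining only one card. Combining the two directions gives the equivalence and re-establishes the invariant for $C_{j+1}$.

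The main obstacle is precisely this last step: certifying that losing an $o_j+3$ card can never be repaired forces me to reason about the global placement and multiplicity of each value rather than argue purely locally within $C_j$. The simple counting at the first three positions (three forced buffers against a hand of size two) is the conceptual heart, but it only bites once the no-future-copy claim, resting on the value-range separation and the isolating hand dumps, is firmly in place.
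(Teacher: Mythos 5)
Your proof is correct and follows essentially the same approach as the paper's: satisfied-literal colors advance to $o_j+10$ via the opening run, and the heart of both arguments is the pigeonhole that the three unplayable $o_j+3$ cards at the head of the final block cannot all be buffered in a hand of size two, while a single already-advanced color frees enough hand space for the other two to store their $o_j+3$ cards and reach $o_j+6$. Your extra ``no future copy'' argument (value-range separation between gadgets plus the isolating hand dumps) is a detail the paper leaves implicit; it is sound, and it is precisely what makes the stranded color's failure permanent in the subsequent Theorem~\ref{theo_equival}.
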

\begin{proof}
Recall that we consider a literal to be true if its color advanced to $o_j+5$ before processing $C_j$ and false if it advanced to $o_j+1$ (if it has not advanced even up to $o_j+1$ then the formula is unsatisfiable as described later). If any literal was true, then it can advance to $o_j+10$ due to the first sequence of cards from $o_j+6$ to $o_j+10$. Otherwise, it will not have advanced at all yet and due to the hand dump gadget we cannot store cards from the first part of $C_j$. Now we can show the claimed equivalence in both directions.

If at least one literal is true it does not need any cards from the second segment to advance and will have already advanced to $o_j+10$ by the time we reach values $o_j+3$ in the gadget. For the other two literals we have that either they will have also already advanced (if they were also true), or they can advance by storing the value $o_j+3$ in hand for each non-advanced color and playing all values from $o_j+2$ to $o_j+6$. 

However, if all three literals are false, then none will have advanced past $o_j+1$ when entering the second part of $C_j$. Since we can only store $o_j+3$ in hand for two colors, one color won't be able to advance beyond $o_j+2$.
\end{proof}

From the above results we know that after we scanned through $\sigma '$ we can play at least up to value $o_m+6$ (in half of the colors we can play up to value $o_m+10$) if and only if the variable assignment created during the variable assignment phase satisfied all clauses. For the dummy color, we used a hand reduction gadget and two hand dump gadgets per clause, thus $6m+2$ cards  have been played. We append to $\sigma '$ cards with values $o_m+6$ to $6m+2$ in increasing order in all colors (except the dummy color). Let $\sigma$ be the resulting sequence.

\begin{theorem}\label{theo_equival}
There is a valid solution of {\sc Hanabi} for $\sigma$ $(r=2)$ and $h=2$ if and only if the associated problem instance of {\sc 3-SAT} is satisfiable. 
\end{theorem}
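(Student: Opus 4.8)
The plan is to prove both directions of the equivalence by induction on the clause gadgets, leaning on the invariant stated just before the clause construction: before scanning $C_j$, there is a play sequence that plays up to $o_j+1$ cards in one color of each pair $\{2i-1,2i\}$ and up to $o_j+5$ in the other, and no play sequence can exceed these bounds in any color. By Lemma~\ref{lem_init} this invariant holds for $j=1$ (where $o_1=0$), so after $\sigma_1$ each variable pair has exactly one color that can reach value $5$ and one that is stuck at value $1$; this dichotomy is precisely what lets us read a truth assignment off any play sequence. The three lemmas \ref{lem_assigned_color_1}, \ref{lem_assigned_color_2} and \ref{lem_invariant} together supply the inductive step: they describe, for a fixed clause, how far each color can advance, and crucially also carry the matching upper bounds (``exactly up to'', ``never more than five cards''), so the induction transports both the achievability and the non-exceedance halves of the invariant from $C_j$ to $C_{j+1}$ with $o_{j+1}=o_j+5$.

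For the forward direction I would start from a satisfying assignment and set each variable in $\sigma_1$ accordingly, storing the value-$2$ card of the chosen color under the hand reduction gadget so that color advances to $5$ while the other is left at $1$. Processing the clauses in order, the hand-dump gadgets separate the segments of each $C_j$, so the colors can be advanced independently: every color standing at $o_j+5$ advances to $o_j+10$ by Lemma~\ref{lem_assigned_color_1}, every non-associated color at $o_j+1$ advances to $o_j+6$ by Lemma~\ref{lem_assigned_color_2}, and---since the assignment satisfies $W_j$, at least one associated color stands at $o_j+5$---the three associated colors all advance by five by the ``if'' direction of Lemma~\ref{lem_invariant}. Hence every color advances by exactly five, the invariant is restored, and after the last clause every color sits at either $o_m+6$ or $o_m+10$. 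The appended block of values $o_m+6,\ldots,6m+2$ in every color is then played greedily as it arrives, completing all colors to $n=6m+2$ and giving a winning play sequence.

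For the backward direction I would take any winning play sequence and first extract the assignment: since a winning sequence must fully play the dummy color, Lemma~\ref{lem_init} forbids both $2i-1$ and $2i$ from reaching value $2$, so exactly one color of each pair advanced and this determines the truth value of $x_i$. I then argue that every clause is satisfied. Suppose some $W_j$ is falsified; then all three associated colors stand at $o_j+1$ when $C_j$ is reached, and by the ``only if'' part of Lemma~\ref{lem_invariant} one of them cannot advance beyond $o_j+2$. The upper-bound half of the invariant then pins that color strictly below $o_{j+1}+1$, and since the remaining occurrences of its missing values lie inside $C_j$ (already passed) and the final append begins only at value $o_m+6$, the color can never reach $n$, contradicting that the sequence is winning. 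Therefore every clause is satisfied and the assignment is a model of the formula.

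The main obstacle is this permanence step in the backward direction: I must show that once a color is blocked below the value demanded by the invariant it can never recover. This rests on the multiplicity bound $r=2$ together with the placement of cards---each value of a given color occurs only in the clause gadget whose $o_j$-window contains it and, for the top values, in the final append---so a value skipped inside $C_j$ has no later copy to replay. A secondary point needing care is the well-definedness of the extracted assignment, ruling out a color that is ``partially'' advanced in a way not covered by the clean $o_j+1$ versus $o_j+5$ dichotomy; this too is guaranteed by the upper-bound half of the invariant carried through the induction.
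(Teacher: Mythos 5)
Your proposal is correct and takes essentially the same approach as the paper: both directions rest on the invariant transported across clause gadgets by Lemmas~\ref{lem_init}, \ref{lem_assigned_color_1}, \ref{lem_assigned_color_2} and~\ref{lem_invariant}, and your ``permanence'' step is exactly the paper's observation that the smallest value appearing in any later gadget is $o_j+7$, so a color blocked at $o_j+2$ can never recover. The only difference is cosmetic: you phrase the backward direction as extracting an assignment from a winning play sequence and deriving a contradiction, whereas the paper argues the contrapositive (an unsatisfiable instance forces some associated color to get stuck at the first unsatisfied clause), which is the same argument.
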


\begin{proof}
If the associated problem instance of 3-SAT is satisfiable, then there exists a truth assignment satisfying all clauses, and then by Lemmas~\ref{lem_init},~\ref{lem_assigned_color_1},~\ref{lem_assigned_color_2} and~\ref{lem_invariant}, we can play all colors up to the card $6m+2$ from~$\sigma$.
If the associated problem instance of 3-SAT is not satisfiable then, for any truth assignment, there should be one or more clauses that are not satisfied. Let $j$ be the index of the first clause that is not satisfied by some truth assignment. By Lemma~\ref{lem_invariant}, we will not be able to play a card of value $o_j+3$ in one of the three colors associated to $C_j$. Since the smallest value in any of the next gadgets is $o_j+7$, no more cards can be played in that color. In particular, there cannot be a solution for this {\sc Hanabi} problem instance.
\end{proof}

\subsection{Modifications to the reduction}
The above reduction can be easily constructed in polynomial time. In order to complete the proof of Theorem~\ref{theo_nphard} we must show how to adapt the hardness for other values of $r$ and $h$. In this section we introduce these and other modifications to our reduction.

\paragraph{Fixed multiplicity and larger hand size} Adapting the construction for larger values of $r$ is easy: if we want to have exactly $r>2$ copies of each card, it suffices to place $r-2$ additional copies of each card at some position in which it cannot affect our reduction (for example, the first time a card $c$ appears we place $r-2$ identical copies). It is never useful to play (or store) more than one copy of the same card, so overall the reduction is unaffected. Similarly, if $h>2$ we use a hand reduction gadget to make sure that the hand size is  exactly $2$ for the interval in which $\sigma$ is processed. 

\paragraph{Small hand size} In order to make our construction work for $h=1$, we slightly modify the clause gadget. Recall that, for the colors associated to a clause gadget $W_j$ we place the cards in the following order: $o_j+3,\overline{o_j+3},\overline{\overline{o_j+3}}, o_j+2,o_j+4,o_j+5,o_j+6, \overline{o_j+2},\overline{o_j+4},\overline{o_j+5},\overline{o_j+6}$, $\overline{\overline{o_j+2}}, \overline{\overline{o_j+4}},\overline{\overline{o_j+5}},\overline{\overline{o_j+6}}$. Immediately after these cards we now insert twelve additional cards as follows: $o_j+4,\overline{o_j+4},\overline{\overline{o_j+4}}, 
o_j+3,o_j+5,o_j+6, 
\overline{o_j+3},\overline{o_j+5},\overline{o_j+6}$, $\overline{\overline{o_j+3}},\overline{\overline{o_j+5}},\overline{\overline{o_j+6}}$. 

Intuitively speaking, this gadget almost duplicates the original one. Thus, making one pass at the original gadget (with hand size two) is the same as making a pass at this ``doubled'' gadget with hand size one. Similar to the previous analysis, it can be seen that when we scan a modified clause gadget, we can play 5 cards of all colors if and only if that clause is satisfied (otherwise, for one color we won't be able to play more than two cards). Note that this gadget increases the number of repetitions of each card to three, so the case $h=1$ and $r=2$ remains open.

\paragraph{Constant number of values} The construction can also work if $n$ is fixed to be a small constant, say $30$. In this case, we reduce from a variation of 3-SAT called 3-SAT$\kappa$: in addition to the usual 3-SAT constraints, an instance of this problem has the additional constraint that each variable appears in exactly $\kappa$ clauses (and a variable appears in a clause no more than once). Finding an assignment that satisfies such an instance is known to be NP-complete even for $\kappa=5$~\cite{Feige}. 

When examining a clause gadget, instead of advancing in all colors, we just advance 5 numbers in the three colors associated to the literals belonging in that clause (and the three belonging to the complements): specifically, at the $i$-th appearance of variable $x_j$ we can progress up to $5i+1$ and $5i+5$ in the colors $2j$ and $2j+1$ (as usual, which of the two progresses more depends on the variable assignment), but we can only progress in the three pairs of colors if at least one of the variables satisfies the clause. Note that the usual construction also uses $\Omega(m)$ values for the dummy color. We can reduce it to a constant number of values by using a different dummy color each time instead. This will create $O(m)$ dummy colors, each one having a constant number of values as desired.

\paragraph{Larger number of players} The hardness also extends to the case in which we have $p$ players. With more than one player the {\em give a hint} action allows a player to pass (i.e., neither draw nor discard). Thus, if we have enough hints (say, at least $pN$), the game with $p$ players each with a hand size of $h$ can be seen as  
 a single player instance, in which the single player can hold $ph$ cards in hand.

\section{Conclusions}
In this paper, we studied the complexity of a single player, perfect-information version of Hanabi and we showed that it is NP-hard (completeness easily follows), even if the hand size and number of card repetitions are small constants. Proving hardness for such a simplistic model uncovers the importance of hand management, something that had been overlooked in previous studies of the game.

Several questions regarding the complexity of the game remain unanswered. For example, is the game still NP-hard if we bound the number of colors instead of the hand size?
Furthermore, as the problem is very rich in parameters, it would be fruitful to study it from a parameterized complexity point of view. 

\section*{Acknowledgements}
The authors would like to thank Erik Demaine for some insightful comments on the problem.


\bibliography{refs-condenced}

\newpage
\appendix
\section{The Rules of Hanabi}\label{app:rules}

In this appendix we introduce the official rules of Hanabi \cite{AntoineBauza10}. 

\subsection*{Game Material}

50 fireworks cards in five colors (red, yellow, green, blue, white): 
\begin{itemize}
\item
10 cards per color with the values 1, 1, 1, 2, 2, 3, 3, 4, 4, 5, 
\item
10 colorful fireworks cards with same values as above 
\item
8 Clock (Note) tokens (+ 1 spare), 
\item
3 Storm (Fuse) tokens. 
\end{itemize}

\subsection*{Aim of the Game}
Hanabi is a cooperative game, meaning all players play together as a team. 
The players have to play the fireworks cards sorted by colors and numbers. 
However, they cannot see their own hand cards, 
and so everyone needs the advice of their fellow players. 
The more cards the players play correctly, the more points they receive when the game ends.

\subsection*{The Game}
The oldest player is appointed first player and sets the tokens in the play area.
The eight Clock tokens are placed white-side-up.
The three Storm tokens are placed lightning-side-down.

Now the fireworks cards are shuffled. Depending on the number of players involved, 
each player receives the following hand:
\begin{itemize}
\item
With 2 or 3 players: 5 cards in hand, 
\item
With 4 or 5 players: 4 cards in hand. 
\end{itemize}

\noindent
Important: For the basic game, the colorful fireworks cards and the spare Clock token(s) 
are not needed. 
They only come in to use for the advanced game.

\noindent
Important: Unlike other card games, players may not see their own hand! 
The players take their hand cards so that the back is facing the player. 
The fronts can only be seen by the other players. 
The remaining cards are placed face down in the draw pile in the middle of the table. 
The first player starts.

\subsection*{Game Play}

Play proceeds clockwise. On a player's turn, they must perform exactly one of the following:

A. Give a hint or

B. Discard a card or

C. Play a card.

\noindent
The player has to choose an action. A player may not pass!

\noindent
Important: Players are not allowed to give hints or suggestions on other players' turns!

\subsubsection*{\rm A. Give a hint}

To give a hint one Clock token must be flipped from its white side to its black side. 
If there are no Clock tokens white-side-up then a player may not choose 
the Give a hint action.
Now the player gives a teammate a hint. They have one of two options:

\begin{description}
\item[1. Color Hint.]
The player chooses a color and indicates to their teammate 
which of their hand cards match the chosen color by pointing at the cards.
Important: The player must indicate all cards of that color in their teammate's hand!
Example: ``You have two yellow cards, here and here.''
Indicating that a player has no cards of a particular color is allowed!
Example: ``You have no blue cards.''

\item[2. Value Hint.] 
The player chooses a number value and gives a teammate a hint 
in the exact same fashion as a Color Hint.
Example: ``You have a 5, here.''
Example: ``You have no Twos.''
\end{description}

\subsubsection*{\rm B. Discard a card}
To discard a card one Clock token must be flipped from its black side to its white side. 
If there are no Clock tokens black-side-up then a player 
may not choose the Discard a card action.
Now the player discards one card from their hand (without looking at the fronts 
of their hand cards) and discards it face-up in the discard pile near the draw deck. 
The player then draws another card into 
their hand in the same fashion as their original hand cards, never looking at the front.

\subsubsection*{\rm C. Play a card}
By playing out cards the fireworks are created in the middle of the table. 
The player takes one card from their hand and places it face up in the middle of the table. 
Two things can happen: 
\begin{description}
\item[1. The card can be played correctly.] 
The player places the card face up so that it extends a current firework 
or starts a new firework.
\item[2. The card cannot be played correctly.]
The gods are angry with this error and send a flash from the sky. 
The player turns a Storm tile lightning-side-up. 
The incorrect card is discarded to the discard pile near the draw deck.
\end{description}
In either case, the player then draws another card into their hand in the same fashion 
as their original hand cards, never looking at the front.

\subsection*{The Fireworks}
The fireworks will be in the middle of the table and are designed in five different colors. 
For each color an ascending series with numerical values from 1 to 5 is formed. 
A firework must start with the number 1 and each card played to a firework 
must increment the previously played card by one. 
A firework may not contain more than one card of each value.

\subsection*{Bonus}
When a player completes a firework by correctly playing a 5 card 
then the players receive a bonus. 
One Clock token is turned from black side to white side up. 
If all tokens are already white-side-up then no bonus is received. 
Play then passes to the next player (clockwise).

\subsection*{Ending the Game}

The game can end in three ways:

\begin{enumerate}
\item
The third Storm token is turned lightning-side-up. 
The gods deliver their wrath in the form of a storm that puts an end to the fireworks. 
The game ends immediately, and the players earn zero points.

\item
The players complete all five fireworks correctly. 
The game ends immediately, and the players 
celebrate their spectacular victory with the maximum score of 25 points.

\item
If a player draws the last card from the draw deck, the game is almost over. 
Each player---Including the player who drew the last card---gets one last turn.
\end{enumerate}

Finally, the fireworks will be counted. 
For this, each firework earns the players a score equal to the highest value card 
in its color series.
The quality of the fireworks display according to the rating scale 
of the ``International Association of Pyrotechnics'' is:
\begin{itemize}
\item
0--5: Oh dear! The crowd booed.
\item
6--10: Poor! Hardly applaused.
\item
11--15: OK! The viewers have seen better.
\item
16--20: Good! The audience is pleased.
\item
21--24: Very good! The audience is enthusiastic!
\item
25: Legendary! The audience will never forget this show!
\end{itemize}

\subsection*{Important Notes and Tips}

\begin{itemize}
\item
Players may rearrange their hand cards and change their orientation 
to help themselves remember the information they received. 
Players may not ever look at the front of their own cards until they play them.
\item
The discard pile may always be searched for information.
\item
Hanabi is based on communication---and non-communication---between the players. 
If one interprets the rules strictly then players may not, 
except for the announcements of the current player, talk to each other. 
Ultimately, each group should decide by its own measure
what communication is permitted. Play so that you have fun!
\end{itemize}

\end{document}